\tikzstyle{every picture}=[baseline=-0.25em,]
\tikzstyle{box}=[shape=rectangle, text height=1.5ex, text depth=0.25ex, yshift=0.5mm, fill=white, draw=black, minimum height=5mm, yshift=-0.5mm, minimum width=5mm, font={\small}]
\tikzstyle{gate}=[shape=rectangle, text height=1.5ex, text depth=0.25ex, yshift=0.5mm, fill=white, draw=black, minimum height=5mm, yshift=-0.5mm, minimum width=5mm, font={\small}, tikzit category=circuit]
\tikzstyle{big gate}=[shape=rectangle, text height=1.5ex, text depth=0.25ex, yshift=0.5mm, fill=white, draw=black, minimum height=10mm, yshift=-0.5mm, minimum width=5mm, font={\small}, tikzit category=circuit]
\tikzstyle{Z dot}=[inner sep=0mm, minimum size=2mm, shape=circle, draw=black, fill={rgb,255: red,221; green,255; blue,221}, tikzit category=zx]
\tikzstyle{Z phase dot}=[minimum size=5mm, font={\footnotesize\boldmath}, shape=rectangle, rounded corners=2mm, inner sep=0.2mm, outer sep=-2mm, scale=0.8, tikzit shape=circle, draw=black, fill={rgb,255: red,221; green,255; blue,221}, tikzit draw=blue, tikzit category=zx]
\tikzstyle{X dot}=[Z dot, shape=circle, draw=black, fill={rgb,255: red,255; green,136; blue,136}, tikzit category=zx]
\tikzstyle{X phase dot}=[Z phase dot, tikzit shape=circle, tikzit draw=blue, fill={rgb,255: red,255; green,136; blue,136}, font={\footnotesize\boldmath}, tikzit category=zx]
\tikzstyle{hadamard}=[fill=yellow, draw=black, shape=rectangle, inner sep=0.6mm, minimum height=1.5mm, minimum width=1.5mm, tikzit category=zx]
\tikzstyle{paulibox}=[fill={rgb,255: red,221; green,221; blue,255}, draw=black, shape=rectangle, inner sep=0.6mm, minimum height=5mm, minimum width=5mm, font={\footnotesize}, text height=1.5ex, text depth=0.25ex, tikzit category=zx]
\tikzstyle{vertex}=[inner sep=0mm, minimum size=1mm, shape=circle, draw=black, fill=black, tikzit category=misc]
\tikzstyle{vertex set}=[inner sep=0mm, minimum size=1mm, shape=circle, draw=black, fill=white, font={\footnotesize\boldmath}, tikzit category=misc]
\tikzstyle{small black dot}=[fill=black, draw=black, shape=circle, inner sep=0pt, minimum width=1.2mm, tikzit category=circuit]
\tikzstyle{cnot ctrl}=[fill=black, draw=black, shape=circle, inner sep=0pt, minimum width=1.2mm, tikzit category=circuit]
\tikzstyle{cnot targ}=[fill=white, draw=white, shape=circle, tikzit category=circuit, label={center:$\oplus$}, inner sep=0pt, minimum width=2.1mm, tikzit fill={rgb,255: red,102; green,204; blue,255}, tikzit draw=black]
\tikzstyle{ket}=[fill=white, draw=black, shape=regular polygon, regular polygon sides=3, regular polygon rotate=-30, scale=0.7, inner sep=1pt, tikzit category=circuit, tikzit shape=rectangle, tikzit fill=green]
\tikzstyle{bra}=[fill=white, draw=black, shape=regular polygon, regular polygon sides=3, regular polygon rotate=30, scale=0.7, inner sep=1pt, tikzit category=circuit, tikzit shape=rectangle, tikzit fill=red]
\tikzstyle{scalar}=[shape=rectangle, text height=1.5ex, text depth=0.25ex, yshift=0.5mm, fill=white, draw=black, minimum height=5mm, yshift=-0.5mm, minimum width=5mm, font={\small}]
\tikzstyle{clabel}=[fill=white, draw=none, shape=rectangle, tikzit fill={rgb,255: red,56; green,255; blue,242}, font={\footnotesize}, inner sep=1pt, tikzit category=labels]
\tikzstyle{empty diagram}=[draw={gray!40!white}, dashed, shape=rectangle, minimum width=1cm, minimum height=1cm, tikzit category=misc]
\tikzstyle{simple}=[-]
\tikzstyle{hadamard edge}=[-, dashed, dash pattern=on 2pt off 0.5pt, thick, draw={rgb,255: red,68; green,136; blue,255}]
\tikzstyle{box edge}=[-, dashed, dash pattern=on 2pt off 0.5pt, thick, draw={rgb,255: red,203; green,192; blue,225}]
\tikzstyle{brace edge}=[-, tikzit draw=blue, decorate, decoration={brace,amplitude=1mm,raise=-1mm}]
\tikzstyle{diredge}=[->]
\tikzstyle{double edge}=[-, double, shorten <=-1mm, shorten >=-1mm, double distance=2pt]
\tikzstyle{gray edge}=[-, {gray!60!white}]
\tikzstyle{pointer edge}=[->, very thick, gray]
\tikzstyle{boldedge}=[-, line width=1.6pt, shorten <=-0.17mm, shorten >=-0.17mm]
\newtheorem{definition}{Definition}
\newtheorem{theorem}{Theorem}
\newtheorem{lemma}{Lemma}
\newtheorem{rem}{Remark}
\newtheorem{corollary}{Corollary}
\newcommand{\cat}{\mathcal{C}}
\newcommand{\vc}{\textbf{Vec}}
\newcommand{\hilb}{\textbf{Hilb}}
\begin{document}

%\preprint{APS/123-QED}

\title{The ZX-calculus as a Language for Topological Quantum Computation}% Force line breaks with \\
\thanks{The authors thank the reviewers for their feedback. AK would like to acknowledge support from AFOSR grant FA2386-18-1-4028. }

\author{Fatimah Rita Ahmadi}
\email{f.ahmadi@imperial.ac.uk}
\author{Aleks Kissinger}
\affiliation{
Department of Computer Science, University of Oxford}
\begin{abstract}
Unitary fusion categories formalise the algebraic theory of topological quantum computation. These categories come naturally enriched in a subcategory of the category of Hilbert spaces, and by looking at this subcategory, one can identify a collection of generators for implementing quantum computation. We represent such generators for the Fibonacci and Ising models, namely the encoding of qubits and the associated braid group representations, with the ZX-calculus and show that in both cases, the Yang-Baxter equation is directly connected to an important rule in the complete ZX-calculus known as the P-rule, which enables one to interchange the phase gates defined with respect to complementary bases. In the Ising case, this reduces to a familiar rule relating two distinct Euler decompositions of the Hadamard gate as $\pi/2$ Z- and X-phase gates, whereas in the Fibonacci case, we give a previously unconsidered exact solution of the P-rule involving the Golden ratio. We demonstrate the utility of these representations by giving graphical derivations of the single-qubit braid equations for Fibonacci anyons and the single- and two-qubit braid equations for Ising anyons. We furthermore present a fully graphical procedure for simulating and simplifying braids with the ZX-representation of Fibonacci anyons.  
\end{abstract}

\keywords{topological quantum computation, fusion category, ZX-calculus, anyon, braiding}
%Use showkeys class option if keyword
                              %display desired
\maketitle

%\tableofcontents

\section{\label{sec:level1}Introduction}

One of the biggest obstacles in constructing a quantum computer is the prevalence of errors due to the delicate and highly-sensitive nature of the physical systems typically encoded quantum data. There are essentially two ways to cope with errors in quantum computation. The first is to develop hardware that produces as few errors as possible and deal with the remaining errors via quantum error correction, i.e. by introducing some extra quantum memory and computational overhead to correct errors during the computation. The second, and in some ways related approach, is to develop models of computation that are inherently robust to local errors. Topological quantum computing~\cite{sarma2006} aims to be such a model.

The main idea behind topological quantum computation is to exploit features of the excitations of 2-dimensional topologically ordered systems called anyons. Qubits are encoded in the mutual states of anyons, and computation is done by exchanging, or \textit{braiding} them over time~\cite{rowell2016}. Since representations of braid groups formalise anyonic statistics and topological charges of mutual states of anyons are robust under local perturbations, the computation is intrinsically resistant to errors.

Anyons break the standard physicists' intuition about bosonic and fermionic statistics. Exchanging a pair of identical bosons leaves the quantum state invariant, whereas exchanging a pair of fermions introduces a global phase of $-1$. However, exchanging abelian anyons multiplies the state with a non-trivial phase factor $e^{i\theta}$, and exchanging non-abelian anyons applies a more generic unitary operation. In other words, we transition from the symmetric group to the braid group, which carries statistical properties of anyons. 

There is a significant difference between computation with anyons and other particles, for example, photons. In computing with other particles,  we encode qubits in the states of particles. For example, in the up/down spins of electrons or right/left photon polarization.  In topological quantum computation, qubits are encoded in the mutual statistics of anyons. Hence, in this sense, one does not need to know the Hilbert space of or an exact state of anyons. Qubits are processes from an initial configuration of anyons to a final configuration. In other words, the vector space of processes between different configurations is the computational space. 

The language of category theory is well suited to formalise anyons and processes between them since categories generally are agnostic to object properties~\cite{mac2013}. In particular, unitary fusion categories offer an elegant formulation for anyonic models, and many properties of anyons can be studied purely in terms of the structure of such categories.

Another project started by Abramsky and Coecke tries to formalise quantum mechanics with category theory~\cite{abramsky2004}. Categorical quantum mechanics attempts to shift the idea from states to processes. It proposes that quantum properties can be captured by examining the structural and algebraic properties of the category of Hilbert spaces, \hilb. A notable feature of the CQM programme is the focus on graphical calculi, which use the string diagram notation native to monoidal categories to capture and reason about processes. Perhaps the most notable graphical calculus is the ZX-calculus~\cite{coecke2011interacting}, which has been used extensively in the study of quantum computation. We will discuss this further in the following sections. 

A natural question to ask is the connection between these two pictures. How is TQC related to CQM? Or one can ask, given that TQC is a particular model of quantum computation, and CQM is the general categorical formalism for any model of quantum computation, is there any functor between these two categories? If yes, what are the properties of this functor? 

One may immediately declare a positive answer, and one may correctly suspect the functor is an inclusion. But this is not immediate since objects of the TQC category, as it has appeared in the literature, are anyon types. However, as mentioned earlier, the spaces assigned to anyon types are not generally finite-dimensional nor computational space. Hence, anyons cannot be mapped through a functor to finite dimensional vector spaces, i.e. objects of \textbf{Hilb}. Moreover, any map which sends hom-spaces of TQC to objects of \textbf{Hilb} is not functorial. 

Here, we attempt to clarify this subtle point in the literature. 
We make a clear distinction between an anyonic category and define the category of topological quantum computation, and demonstrate that the topological quantum computation category with our definition is a subcategory of \textbf{Hilb}. In other words, we distinguish between an anyonic category as the hardware of computation and topological quantum computing as a model of computation. Subsequently, we can clearly observe any anyonic category can have multiple topological quantum computation categories defined by fusion spaces, fusion and braiding matrices. 

Having this sketch in mind, we represent the elements of two well-defined models of topological quantum computation, namely Fibonacci and Ising, with the ZX-calculus. We show the ZX-equivalent of Fibonacci and Ising single qubit gates. We also derive a new P-rule for Fibonacci anyons that precisely reduces and simplifies a chain of Fibonacci angles graphically without explicitly working with irrational angles such as the golden ratio. We moreover graphically prove the Yang-Baxter equation for both cases using the ZX-calculus rewrite rules and corresponding Fibonacci and Ising P-rules. This leads us to the idea of using the ZX-representation of anyons for simplifying braids, which is useful while simulating braids with a quantum computer. 

\subsection{\label{sec:level2}Topological Quantum Computation}

We assume the reader has an elementary knowledge of category theory at the level e.g. of Leinster's book \textit{Basic Category Theory}~\cite{leinster2014}.  

In this section, We first provide some necessary background on monoidal, semi-additive, semi-simple, braided, ribbon, and unitary categories. We describe the topological quantum computing category as it has appeared in the literature. We then extract and define a category based on this category solely for quantum computing purposes. We clearly differentiate between these two categories by calling the former category an anyonic theory category and the latter a TQC category. 

\begin{definition}
A \textbf{monoidal category}  $(\mathcal{C}, \otimes, F, l, r, I)$ is a category with the unit object, $I$, tensor functor $\otimes: \mathcal{C} \times \mathcal{C} \longrightarrow \mathcal{C}$ and natural transformations $F: (-\otimes-) \otimes- \longrightarrow -\otimes (- \otimes-)$ as associator and left- and right-unitors $l: I \otimes - \longrightarrow -$  and $r: - \otimes I \longrightarrow -$ such that they satisfy pentagonal equation Figure~\ref{fig:penta} and triangle equation Figure~\ref{fig:triangle}. If $F$, $r$, and $l$ are identity morphisms, the category is \textbf{strict}. 
\begin{figure}[t] 
\begin{center}\scalebox{0.8}{
$\begin{array}{c}
\xymatrix{
& ((X \otimes Y) \otimes Z) \otimes W \ar[dl]_{F_{X, Y, Z} \otimes \id_W} \ar[dr]^{F_{X \otimes Y,Z,W}} \\
(X \otimes (Y \otimes Z)) \otimes W \ar[d]_{F_{X,Y \otimes Z,W}} & & (X \otimes Y) \otimes (Z \otimes W) \ar[d]^{F_{X,Y,Z \otimes W}} \\
X \otimes ((Y \otimes Z) \otimes W) \ar[rr]_{\id_X \otimes F_{Y,Z,W}} & & X \otimes (Y \otimes (Z \otimes W))}
\end{array}$}
\end{center}
\caption{Pentagonal equation.}\label{fig:penta}
\end{figure}
\begin{figure}[!t]
\begin{center}\scalebox{0.8}{
$
\begin{array}{c}
\xymatrix{
(X \otimes I) \otimes Y \ar[rr]^{F_{X,I,Y}} \ar[dr]_{r_X \otimes \id_Y} & & X \otimes (I \otimes Y) \ar[dl]^{\id_X \otimes l_Y} \\
& X \otimes Y}
\end{array}$
}
\end{center}
\caption{Triangle equation.} \label{fig:triangle}
\end{figure}
	
\end{definition}

A well-developed example of a monoidal category is the category of $\mathbb{k}$-vector spaces over a field $\mathbb{k}$. There are at least two ways to make this into a monoidal category: setting the monoidal product to be the tensor product (in which case $I$ becomes the 1-dimensional space $\mathbb{k}$) or setting it to be the direct sum (in which case $I$ is the 0-dimensional space $\textbf{0}$).

\begin{definition}
Let~$\cat$ be a monoidal category, and $X$ be an object in $\cat$. A right dual to $X$ is an object $X^*$ with two morphisms
	\begin{align}
		&	e_X: X^* \otimes X \longrightarrow I \\
		&	i_X: I \longrightarrow X \otimes X^*
	\end{align}
	such that $(id_X \otimes e_X)\circ(i_X \otimes id_{X})=id_X$ and $(e_X \otimes id_{X^*}) \circ (id_{X^*} \otimes i_X) = id_{X^*}$. 
\end{definition}

A left dual can be defined correspondingly by interchanging the roles of $X$ and $X^*$ in the definition above.

\begin{definition}
A monoidal category is \textbf{rigid} if every object has the equivalent right and left duals.  
\end{definition}
Note that $*$ is just an equivalence from $\cat$ to $\mathcal{C}^{opp}$. One can, furthermore, prove dual objects are unique up to a unique isomorphism. We defined a more restrictive version of rigidity; in general, left and right duals can be different~\cite{bakalov}. 

Next, we need to define a notion of addition between objects. This category can, furthermore, demand an addition between morphisms.
\begin{definition}
A \textbf{semi-additive category} is a category whose objects have  direct sums, i.e.  $X \oplus Y$.
\end{definition}

A direct sum, which is also sometimes called a biproduct, is an object that is both a product and coproduct of two objects in a coherent way. As soon as one has biproducts, it is possible to take sums of morphisms. We can take this further and assume that one can form arbitrary linear combinations of morphisms and that composition of morphisms $g \circ f$ is linear in both $f$ and $g$. In this case, we say a category is \textit{enriched in Vector spaces}. Adding this to the notion above, we obtain the following definition. 

\begin{definition}An \textbf{additive category} is a semi-additive category whose objects have finite direct sums, and it is enriched over the category of vector spaces.
\end{definition}
\begin{definition}
An \textbf{abelian category} is an additive category if every morphism has a kernel and a cokernel, and every monomorphism is a kernel and every epimorphism is a cokernel. 
\end{definition}

Note that an alternative definition of abelian categories one finds in the literature assumes that categories are enriched just over abelian groups. Here, we adopt the convention e.g. of~\cite{bakalov} and assume full vector space enrichment. In the next part, we define semi-simple categories. The semi-simplicity ensures objects in the desirable category are restricted to only anyon types. 
\begin{definition}
	A \textbf{sub-object} of an object $X$ is an isomorphism class of monomorphisms.
	$$
	i: Y \hookrightarrow X
	$$ 
\end{definition}
For example, in the category of vector spaces, vector space, $\textbf{0}$, is a subobject of all objects, as there is a unique map from zero vector space to all vector spaces. In the category of sets, given a set, $A$, its subsets are subobjects of $A$.
\begin{definition}
A \textbf{zero object} is an object such that for every object in the category, there exist unique maps from/to the zero object to/from that object.
\end{definition}
\begin{rem}
Note that the zero object is the zero object of the direct sum, meaning $A\oplus 0 \cong A$. 
\end{rem}
An example is the zero vector space in the category of vector spaces. The single-element set is in the category of pointed sets (a pointed set is a set with a chosen element in the set). 
\begin{definition}
A \textbf{simple object} is an object whose sub-objects are only the zero object and itself. Let us denote simple objects with some indices, $X_i$.  
\end{definition}
\begin{definition}
An abelian category is \textbf{semi-simple} if any object $X$ is isomorphic to a direct sum of simple objects. 
\begin{equation}
	X \cong \bigoplus_{i \in I} N_i X_i 
\end{equation}
\end{definition} 
The non-negative integer numbers coefficients behind $X_i$ count the number of nonequivalent projections and injections in hom-sets, i.e. $N_i = dim(hom(X, X_i))$.  In other words, it counts the number of unique isomorphisms between $X$ to $X_i$. 

For example, if $X \cong 2 X_i \oplus 4 X_j$, it means there exist 2 unique morphisms between $X$ and $X_i$ and 4 unique morphisms between $X$ and $X_j$. (Here we restrict ourselves to multiplicity-free models.) From the definition of simple objects, we conclude the only morphism between two simple objects is the zero morphism, so-called \textbf{Schur's Lemma} \cite{bakalov}. 
\begin{equation}
hom(X_i, X_j) \cong \mathbf{\delta_{ij} \mathbb{C}}, 
\end{equation}   
where $\delta_{ij}=0$ if $i\neq j$, and otherwise, $\delta_{ii}=1$. The definitions laid out in the previous parts provide ample structures for fusion categories. As the name suggests, they are categories with enough structures to capture fusion rules. 
\begin{definition}\label{def-rfc}
A category $\cat$ is a \textbf{fusion category} if, 
\begin{itemize}
\item $\cat$ is a semi-simple category over complex numbers, $\mathbb{C}$, 
\item there exist finitely many simple objects in the category, and any non-simple object has a finite direct sum of simple objects, 
\item $\cat$ is monoidal, 
\item $\cat$ is rigid, 
\item for every pair of simple objects $(X_i, X_j)$, $hom(X_i, X_j) \cong \delta_{ij}\mathbb{C}$,
\item the unit object $I$ is simple, we assign index $0$ to the unit object, $X_0 = I$. 
\end{itemize} 
\end{definition}

Being semi-simple over complex numbers, $\mathbb{C}$, implies the category is additive and has simple objects. Considering this condition with monoidality indicates any non-simple object, including the tensor product of two simple ones, can be re-written as a direct sum of simple objects. We refer to each summand, $X_k$, of the direct sum as an outcome of fusion. 
\begin{equation}
	X_i \otimes X_j \cong \bigoplus_{k} N_{ij}^k X_k
\end{equation}
The above equations specify an important property of any anyonic theory, \textbf{fusion rules}, and $N_{ij}^k$ are called \textbf{fusion coefficients}. So one can write, 
\begin{equation}
	N_{ij}^k = dim(hom(X_i \otimes X_j, X_k))
\end{equation}
We also call these hom-sets \textbf{fusion space} and denote them by $V_{ij}^k$.
\[
V_{ij}^k = hom(X_i \otimes X_j, X_k)
\]
In a similar fashion, we call $V_k^{ij}$, \textbf{ decomposition space}.
\[
V_{k}^{ij} = hom(X_k, X_i \otimes X_j)
\]
As mentioned earlier, $N$'s are counting the number of isomorphisms, in this case, creation and annihilation operators, or sometime in the literature, fusion/decomposition channels. Note that if $N_{ij}^k=0$, then anyon $k$ cannot be obtained by fusion $i,j$. 
If one inspects hom-sets of fusion categories, one realises they are either fusion or decomposition spaces because, as mentioned, Schur's lemma indicates any morphism between simple objects is either the zero or identity morphisms. 

Associators and unitors correspondingly can be indexed in a fusion category.  So associators between the tensor product of three consecutive objects will be indexed by four labels, $F_{ijk}^l$.
\begin{equation}
	(X_i \otimes X_j) \otimes X_k \overset{F_{ijk}^l}{\cong} X_i \otimes (X_j \otimes X_k)
\end{equation}
Why four? because as we mentioned earlier, the zero morphism is the only morphism between different simple objects, so the only non-zero morphisms are $F-$matrices indexed as $F_{ijk}^l$ between the same outcome, $X_l$.

The left- and right-unitors are similarly indexed in the following way,   
\begin{align}
	&X_i \otimes X_0 \overset{r_i}{\cong} X_i, & X_0 \otimes X_i \overset{l_i}{\cong} X_i 
\end{align}
A rather not-explicitly-spelt-out property in the literature of anyonic theories is \textbf{skeletality}, i.e, a category which describes an anyonic theory should be a skeleton. Because, in nature, we consider isomorphic anyons identical, we need to work with a skeleton of a fusion category essentially. To find a skeleton of a category, we choose and keep a representative object from each class of isomorphic objects. Therefore, by fusion category, we mean a skeleton of a fusion category such that if there exist two isomorphic objects $X'_i \cong X_i$, we only keep $X_i$ in the category, and we discard isomorphic objects and excessive morphisms. Such  a category is a sub-category of the main category we started from. (Note that we are incidentally assuming the axiom of choice. That should be a legitimate assumption in the context of physical theories.)~\cite{adamek2009}

Another widely assumed property is that we work with a category whose unitors are identity morphisms. Meaning we work with a category for which $r_i = l_i = id_i$. This property subsequently makes $F-$matrices with at least one zero index identity, which reduces the number of pentagonal equations. 

\begin{theorem}
A skeletal fusion category, $(\mathcal{C}, \otimes, F, r, l)$, is equivalent to a skeletal fusion category with strict unitors, $(\mathcal{C}, \otimes', F', id, id)$. 
\end{theorem}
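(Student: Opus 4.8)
The plan is to construct the strictified category explicitly by "absorbing" the unitors into a redefinition of the associator, and then verify that the resulting structure still satisfies the coherence (pentagon and triangle) axioms. Since we are working with a skeletal fusion category, the unit object $X_0 = I$ is a genuine object of the category, and the unitors $r_i : X_i \otimes X_0 \to X_i$ and $l_i : X_0 \otimes X_i \to X_i$ are isomorphisms (invertible by rigidity and semi-simplicity). The idea is to keep the same objects, the same tensor product $\otimes' = \otimes$, and the same morphisms, but to replace $F$ by a new associator $F'$ defined so that the unit-involving cases become identities while leaving the non-unit data essentially unchanged.

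First I would set up notation. Because the category is skeletal and $I$ is simple, $X_i \otimes X_0 \cong X_i \cong X_0 \otimes X_i$ via $r_i$ and $l_i$, and these objects are literally equal in the skeleton, so $r_i, l_i$ are \emph{automorphisms} of $X_i$, i.e. nonzero scalars by Schur's Lemma ($\operatorname{hom}(X_i,X_i) \cong \mathbb{C}$). This scalar reformulation is the key simplification available to us in the skeletal setting. I would then define $F'$ on triples of simple objects by conjugating $F$ with the appropriate unitor scalars in exactly those entries where one of the three factors is $X_0$, choosing the correction so that $F'_{0jk}$, $F'_{i0k}$, and $F'_{ij0}$ all become identity morphisms, and setting $l' = r' = \mathrm{id}$. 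Concretely, the triangle equation already relates $F_{i0k}$ to $r_i \otimes \mathrm{id}$ and $\mathrm{id} \otimes l_k$, so the redefinition is forced, and one extends to general (non-simple) objects by direct sums using semi-simplicity.

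Next I would check that $(\mathcal{C}, \otimes', F', \mathrm{id}, \mathrm{id})$ is a bona fide monoidal category, i.e. that $F'$ satisfies the pentagon (Figure~\ref{fig:penta}) and that the triangle (Figure~\ref{fig:triangle}) now holds trivially since $l' = r' = \mathrm{id}$. The pentagon for $F'$ should follow from the pentagon for $F$ together with the naturality of the unitors and the triangle identity for $F$; this is a diagram chase, and in the skeletal/scalar picture it reduces to a bookkeeping identity among the unitor scalars. Finally, I would exhibit the equivalence itself: the identity-on-objects functor $\Phi$ with $\Phi = \mathrm{id}$ on morphisms, equipped with a monoidal structure natural isomorphism built from the unitor scalars, and verify it is a (strong) monoidal equivalence — this is where one confirms that no information is lost and that $F$ and $F'$ give equivalent categories.

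\textbf{The main obstacle} I expect is the pentagon verification for $F'$: one must confirm that the unitor corrections inserted into the associator are mutually coherent across all the unit-containing and unit-free triples simultaneously, so that the single global redefinition satisfies pentagon without contradiction. This is precisely the content of Mac Lane's coherence theorem for monoidal categories, and the cleanest route is to invoke coherence (every formal diagram of associators and unitors commutes) rather than to grind through the pentagon by hand; the skeletal assumption guarantees that the relevant isomorphisms are scalars, so coherence collapses to a finite set of consistent scalar equations. A secondary subtlety is ensuring the construction respects the skeleton — that $\otimes'$ still lands in the chosen representatives and that $F'$ defined on simples extends compatibly to direct sums — but this is routine given semi-simplicity and multiplicity-freeness.
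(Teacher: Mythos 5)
There is a genuine gap, and it sits at the very first step of your plan: you cannot keep the tensor functor unchanged. In the proposed structure $(\mathcal{C}, \otimes, F', \mathrm{id}, \mathrm{id})$ the unitors must be \emph{natural} isomorphisms, and naturality of $l' = \mathrm{id}$ applied to $f : X \to Y$ reads $\mathrm{id}_Y \circ (\mathrm{id}_I \otimes f) = f \circ \mathrm{id}_X$, i.e. $\mathrm{id}_I \otimes f = f$ for every morphism $f$ (and similarly $f \otimes \mathrm{id}_I = f$). This is a condition on the action of $\otimes$ on \emph{morphisms}, and no redefinition of the associator or unitors can produce it. It also does not follow from the skeletal fusion axioms: all one knows is that $I \otimes -$ is the identity on objects and is isomorphic to the identity functor via $l$, which by naturality of $l$ gives $\mathrm{id}_I \otimes f = l_Y^{-1} \circ f \circ l_X$; this equals $f$ only if the family $\{l_X\}$ is a natural endomorphism of the \emph{identity} functor, which is not automatic. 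Your scalar argument via Schur's Lemma covers morphisms between simple objects, but it fails on hom-spaces between non-simple objects (direct sums with multiplicity), where $l_X$ need not be a scalar and conjugation by it need not be trivial. Concretely, take skeletal $\mathbf{Vec}$ (objects $\mathbb{C}^n$, Kronecker tensor) and re-twist the action of $\otimes$ on morphisms by $f \otimes' g := Q_{n',m'}(f \otimes g)Q_{n,m}^{-1}$ for invertible matrices $Q_{n,m}$ with, say, $Q_{1,2}$ not a scalar matrix: this is still a skeletal fusion category, but $\mathrm{id}_I \otimes' g = Q_{1,m'}\, g\, Q_{1,m}^{-1} \neq g$ for some $g$, so its unitors can never be made identities while keeping that tensor functor.

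This is exactly why the theorem's statement (and the paper's proof) involves a \emph{new} tensor product $\otimes'$: one keeps the objects and morphisms, defines $\otimes'$ to agree with $\otimes$ on objects but to be conjugated on morphisms, and takes the equivalence to be the identity functor equipped with a non-trivial monoidal structure (tensorator) built from $l$ and $r$. Note that your own final step already forces this: if $\Phi$ is the identity on objects and morphisms with tensorator $J_{X,Y} : X \otimes Y \to X \otimes' Y$, naturality of $J$ gives $f \otimes' g = J_{X',Y'} \circ (f \otimes g) \circ J_{X,Y}^{-1}$, so $\otimes'$ can agree with $\otimes$ on morphisms only if $J$ is central---in which case it could not cancel the non-trivial unitors in the first place. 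Choosing $J_{I,X} = l_X$ and $J_{X,I} = r_X$ yields $\mathrm{id}_I \otimes' f = l_Y (\mathrm{id}_I \otimes f)\, l_X^{-1} = f$ by naturality of $l$, so identity unitors become legitimate, and the pentagon for $F'$ holds automatically because $F'$ is transported along this equivalence. A secondary issue: invoking Mac Lane's coherence theorem to verify the pentagon for a hand-modified $F'$ is not a valid use of coherence, which asserts that diagrams built from a \emph{given} monoidal structure commute; it cannot certify a new candidate associator. Transport of structure along the equivalence, as in the paper's sketch, is the clean way to finish.
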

\begin{proof}
For the full proof, check \cite{mythesis}. A proof sketch is as follows; we copy objects and morphisms of the main category and define a new tensor product $\otimes'$ on objects and morphisms. We then need to define an equivalence functor, which in the functor part is identity. Natural isomorphisms of the equivalence functor are defined based on unitors of the starting category, $l$ and $r$. This results in identity unitors $l'$ and $r'$ in the target category.
\end{proof}
\begin{corollary}
In a fusion category with strict unitors, $F-$matrices with at least one zero index are identities. 
\end{corollary}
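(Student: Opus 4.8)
The plan is to treat separately the three ways a zero index can occur---$F_{0jk}^l$, $F_{i0k}^l$, and $F_{ij0}^l$---and in each case to show that the underlying associator isomorphism, with the unit object $I = X_0$ placed in the corresponding slot, is already the identity. Recall that by semi-simplicity the triple tensor product $(X_i \otimes X_j)\otimes X_k$ decomposes into a direct sum of simple objects $X_l$, and that $F_{ijk}^l$ is by definition the restriction of the associator $F_{X_i,X_j,X_k}$ to the $X_l$-summand. Consequently it suffices to prove that each of the full natural isomorphisms $F_{X,I,Y}$, $F_{I,X,Y}$, and $F_{X,Y,I}$ equals the identity whenever the unitors are strict: the restriction of an identity morphism to any summand is again the identity, so every block $F_{ijk}^l$ carrying a zero index is then an identity matrix.

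The case of a central zero index is immediate from the triangle equation of Figure~\ref{fig:triangle}. Taking $X = X_i$ and $Y = X_k$, the triangle reads $(\id_{X_i} \otimes l_{X_k}) \circ F_{X_i, I, X_k} = r_{X_i} \otimes \id_{X_k}$. With strict unitors $l_{X_k} = \id$ and $r_{X_i} = \id$, both $\id_{X_i}\otimes l_{X_k}$ and $r_{X_i}\otimes\id_{X_k}$ are identities, and we read off $F_{X_i,I,X_k} = \id$; restricting to each outcome gives $F_{i0k}^l = \id$.

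For an outer zero index the triangle equation alone does not suffice, and this is the point to get right. What I would invoke are the two standard unit-coherence identities $l_{X\otimes Y}\circ F_{I,X,Y} = l_X \otimes \id_Y$ and $r_{X\otimes Y} = (\id_X \otimes r_Y)\circ F_{X,Y,I}$, together with $l_I = r_I$. These are \emph{derived} rather than assumed, so the careful step is to obtain them from the two axioms actually displayed: a diagram chase on an instance of the pentagon of Figure~\ref{fig:penta} in which one of the four objects is $I$, combined with naturality of $F$ and the triangle. Granting them, strictness of the unitors collapses both identities to $F_{I,X,Y} = \id$ and $F_{X,Y,I} = \id$, and restricting to the outcome $X_l$ yields $F_{0jk}^l = \id$ and $F_{ij0}^l = \id$, completing all three cases.

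I expect the genuine obstacle to be exactly this derivation of the left and right unit-coherence identities from the pentagon and triangle, since the statement is only visually plausible at the level of global isomorphisms, and the bookkeeping relating a structural isomorphism to its fusion-tree matrix elements must be handled with care. An alternative that sidesteps the diagram chase is to invoke the preceding Theorem: the category is monoidally equivalent to its strictification, and Mac Lane's coherence theorem guarantees that any two parallel morphisms built from associators and (strict) unitors coincide, so each $F$-matrix with a unit index must equal the relevant identity. I would nonetheless prefer the direct route above, so that the corollary rests only on the pentagon and triangle already presented.
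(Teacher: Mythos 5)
Your proposal is correct, and for the middle-zero case $F_{i0k}^l$ it coincides exactly with the paper's proof: both instantiate the triangle equation, impose $l = r = \id$, and read off $F_{X,I,Y} = \id$, then restrict to blocks. Where you genuinely diverge is in the outer cases $F_{0jk}^l$ and $F_{ij0}^l$: the paper dismisses these with the claim that they follow ``by similar procedure,'' which glosses over a real subtlety, since the triangle axiom by itself says nothing about $F_{I,X,Y}$ or $F_{X,Y,I}$. Your route through the derived unit-coherence identities $l_{X\otimes Y} \circ F_{I,X,Y} = l_X \otimes \id_Y$ and $(\id_X \otimes r_Y) \circ F_{X,Y,I} = r_{X\otimes Y}$ (Kelly's lemmas, obtained from the pentagon, naturality of the associator, and the triangle) is the standard and correct way to close these cases, and your fallback via strictification and Mac Lane coherence is equally valid. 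In this respect your treatment is more rigorous than the paper's: it identifies and fills the gap the paper waves away, and it also makes explicit the bookkeeping step---that $F_{ijk}^l$ is the restriction of the global associator to the $X_l$-summand, so an identity associator forces identity blocks---which the paper leaves implicit. The only incomplete piece is that you cite rather than execute the diagram chase establishing Kelly's identities; since that derivation is classical and routine, the proposal stands as a complete and, on the outer cases, superior proof sketch.
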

\begin{proof}
In the triangle equation, Figure \ref{fig:triangle}, let $X=i$, $Y=j$, and $i \otimes j = k$. If we have $l=r=id$, then $(id_i \otimes l_j)F_{i0j}^k = (r_i \otimes id_j)$, which results in $F_{i0j}^k = id_k$. Other variations of $F_{0ij}^k=id_{k}$ and $F_{ij0}^k=id_{k}$ can be obtained by similar procedure. 
\end{proof}
The desired category should have another structure that stems from the exchange statistics of anyons; as mentioned earlier, the anyonic exchange behaviour is captured by the braid group rather than the permutation group. A braid can be defined as a natural transformation between two functors; tensor and opposite-tensor. 

An opposite-tensor functor is a functor which first swaps an ordered pair of objects or morphisms and then tensors them. 
\begin{align}
& \otimes^{opp}: \cat \times \cat \longrightarrow \cat, \\
& \otimes^{opp}(A, B) = \otimes (B, A)  = B \otimes A,\\
& \otimes^{opp}(f, g)= \otimes (g, f) = g \otimes f
\end{align}
\begin{definition}
A fusion category $(\cat, \otimes, F, l, r, I)$ is \textbf{braided}, if there exists a natural transformation 
\[R: \otimes \longrightarrow \otimes^{opp}\] 
which satisfies Hexagonal equations Figure \ref{eq:hexagonal}. 
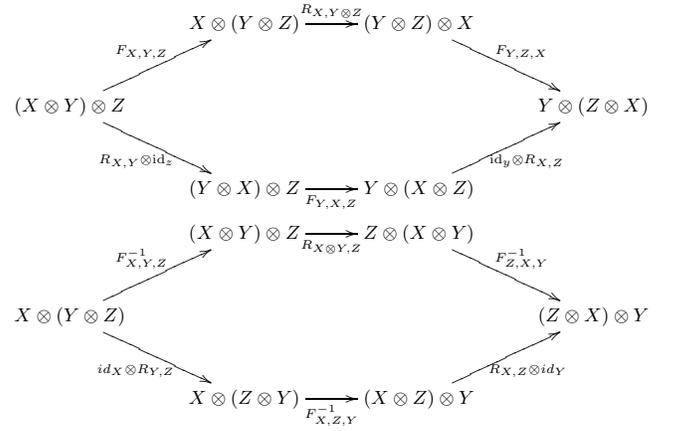
\begin{figure}
\scalebox{0.8}{
		$\begin{array}{c}
			\xymatrix{
				& X \otimes (Y \otimes Z) \ar[r]^{R_{X,Y \otimes Z}} & (Y \otimes Z) \otimes X \ar[dr]^{F_{Y,Z,X}} \\
				(X \otimes Y) \otimes Z \ar[ur]^{F_{X,Y,Z}} \ar[dr]_{R_{X,Y} \otimes \id_z} & & & Y \otimes (Z \otimes X) \\
				& (Y \otimes X) \otimes Z \ar[r]_{F_{Y,X,Z}} & Y \otimes (X \otimes Z) \ar[ur]_{\id_y \otimes R_{X,Z}}
			}
		\end{array}$
	}
 \scalebox{0.8}{
		$\begin{array}{c}
			\xymatrix{
				& (X \otimes Y) \otimes Z \ar[r]_{R_{X \otimes Y, Z}} & Z \otimes (X \otimes Y)  \ar[dr]^{F^{-1}_{Z, X, Y}} \\
				X \otimes (Y \otimes Z) \ar[ur]^{F^{-1}_{X, Y, Z}} \ar[dr]_{id_X \otimes R_{Y, Z}} & & & (Z \otimes X) \otimes Y \\
				& X \otimes (Z \otimes Y) \ar[r]_{F^{-1}_{X, Z, Y}} & (X \otimes Z) \otimes Y \ar[ur]_{R_{X,Z} \otimes id_Y}
			}
		\end{array}$
	}
		\caption{Hexagonal equations.}\label{eq:hexagonal}
	\end{figure}
\end{definition}  
Take the example of a group $G$ as a category; the category is braided only if $G$ is abelian. Note that for a braided tensor category, we have
\begin{align}
	&l_X \circ R_{X, I} = r_X, && r_X \circ R_{I, X} = l_X, && R_{I, X} = R_{X, I}^{-1} &
\end{align}
\begin{rem}
In a category with strict unitors, $R_{0, i}= R_{i, 0}^{-1}=id_i$. 
\end{rem}
Note also the equality $R \otimes R = id$ does not hold in general. It only holds if the category is symmetric and the swapping of particles is captured by the permutation group rather than the braid group. An example of symmetric categories is the category of vector spaces, $\vc$. 

Earlier, it was mentioned braided categories are directly connected to braid groups. Having hexagonal identities, one can prove that the Yang-Baxter equation is obtainable as a corollary of Artin relations of the braid group~\cite{kassel2008}. 
\begin{equation}
	R_{12}R_{13}R_{23} = R_{23}R_{13}R_{12}
\end{equation}
Where $R_{ij}$ keeps the third strand and braids strand numbers $i$ and $j$. We will explicitly show this equation with the ZX-representation of the Ising and Fibonacci models. 

The next structure we need to define is a kind of trace, but before that, we need to define an isomorphism between an object $X$ and its double star $(X^{*})^*$. 
\begin{definition}
A rigid monoidal category equipped with isomorphisms $\phi_X: X \longrightarrow X^{**}$ such that they satisfy the following conditions is \textbf{pivotal}. 
	\begin{align}
		& \phi_{X \otimes Y} = \phi_X \otimes \phi_Y \\
		& f^{**} = f
	\end{align} 
\end{definition}
Having these isomorphisms, we are able to define left and right traces, which are not generally equivalent. We are, however, interested in the categories with equivalent left and right traces. 
\begin{definition}
In a pivotal category, given $f: X \longrightarrow X$, one can define left and right \textbf{traces} as appears below, 
\begin{align}
& tr^r(f) = e_{X^*} \circ (\phi_X \otimes id_{X^*}) \circ (f \otimes id_{X^*}) \circ i_X \\
& tr^l(f) = e_X \circ (e_{X^*} \otimes f) \circ (id_{X^*} \otimes \phi_X^{-1}) \circ i_{X^*}
\end{align}
\end{definition}
\begin{definition}
In a pivotal category, if for every morphism $f$,  $tr^r(f)=tr^l(f)$, the category is called \textbf{spherical}. 
\end{definition}
So far, pivotal and spherical structures are only defined in rigid monoidal categories, but one can further examine the interaction between $\phi$ and braids in a braided rigid monoidal category \cite{etingof2005, turaev2016}. This results in the definition of twist, with a rather interesting physical interpretation. Twists are related to topological spins, which are, in general, different from Dirac spins. Topological spin specifies the phase a particle picks from rotating. It can also be interpreted as a twist in the world-line of a particle; for further discussion on this, refer to Appendix E of~\cite{kitaev2006}. 
\begin{definition}
A braiding is compatible with a pivotal structure if isomorphisms $\theta_X = \psi_X \phi_X$ where $\psi_X = (id_X \otimes e_{X^*}) \circ (R_{X^**, X} \otimes id_{X*}) \circ (id_{X^{**}}\otimes i_X)$ satisfies $\theta_{X^*} = \theta_X^*$. A \textbf{ribbon} category is a spherical braided category with a compatible braiding. 
\end{definition} 
\begin{definition}
A spherical braided fusion category with a compatible braiding is \textbf{ribbon fusion category} (RFC). 
\end{definition}

One of the computationally insightful properties of objects in an RFC is their dimension. For every object $X_i$, we define $d_i = tr(id_i)$. This essentially shows how computational space grows \cite{preskill1998}. An important question to answer is the sign of $d_i$. We show under unitary conditions for an RFC, $d_i \geq 0$. hermitian and unitary definitions agree with definitions that appeared in \cite{turaev2016}. 
\begin{definition}
A \textbf{dagger} functor $\dagger: \cat \longrightarrow \cat^{opp}$ is a functor that keeps objects and reverses the direction of morphisms. 
\[
f: A \longrightarrow B, \hspace{1cm}f^\dagger: B \longrightarrow A 
\]
\end{definition}
\begin{definition}
An RFC is hermitian if for every morphism$f \in hom(X, Y)$ the dagger of every morphism $f^\dagger \in hom(Y, X)$ satisfies below conditions, 
\begin{align}
& (f^\dagger)^\dagger = f, & (f\otimes g)^\dagger = f^\dagger \otimes g^\dagger,& \\
& (f \circ g)^\dagger = g^\dagger \circ f^\dagger, & (id_i)^\dagger = id_i&.
\end{align}
and braid and twist satisfy, 
\begin{align}
& (R_{ij})^\dagger = R_{ij}^{-1}, & (\theta_i)^\dagger = \theta_i^{-1} 
\end{align}
Additionally, dual morphisms are compatible with $\dagger$, 
\begin{align}
& (i_j)^\dagger = e_j \circ R_{jj*} \circ (\theta_i \otimes id_{j*}), \\
& (e_j)^\dagger = (id_{j*} \otimes \theta_j^{-1}) \circ R_{j*j}^{-1} \circ i_j .
\end{align}
\end{definition}
A dagger in an RFC assigns to each morphism in a fusion space, a morphism in the decomposition space, meaning, $\dagger: V_{ij}^k \longrightarrow V_{k}^{ij}$. Or you can also think of it as the dagger of a projection operator is an injection operator. We are now properly positioned to define the inner product on a hermitian RFC. 

\begin{definition}
In a hermitian RFC, an inner product of a pair of morphisms $f, g \in hom(X, Y)$, is defined as, 
\begin{align}
& \langle, \rangle: hom(X, Y) \times hom(X, Y) \longrightarrow \mathbb{C}, \\ 
& \langle f, g\rangle = \frac{1}{\sqrt{dim(X) dim(Y)}} tr(f^\dagger g).
\end{align}
we can check all properties of the inner product. Note we do not explicitly assume $tr(f^\dagger f)$ takes positive values, but if it is positive definite, then the category is \textbf{unitary} \cite{turaev2016}. 
\end{definition}
In a Unitary Ribbon Fusion Category, URFC, one can observe that the dimension of each object $d_i$ is positive. Without the unitarity condition, we could not conclude the following, 
\begin{equation}
	tr(id_{i*} id_i) \geq 0.
\end{equation}
With all these ingredients, we are in the right position to state that a (skeletal) URFC describes an anyonic theory. Simple objects of the category are anyon types, and Schur's lemma guarantees \textbf{selection and superselection rules} hold. Selection rules restrict the possible transitions between quantum states. For instance, it imposes that Anyon $i$,  in the absence of integration, cannot suddenly change to another anyon type. Superselection rules specify which quantum states cannot form a coherent state. For instance, a boson and fermion cannot form a coherent state. In the topological quantum computation context, different anyon types cannot form coherent states~\cite{kitaev2006}. 

Furthermore, fusion rules are defining rules of the theory, $F-$matrices are solutions of pentagonal equations, and braiding matrices, $R$'s, result from solving hexagonal equations. 

An important and undiscussed point is the necessity of modularity condition as a physical requirement for anyonic models. Modular categories ensure the theory has a corresponding topological background field. We did not discuss the modularity condition because we are solely interested in TQC. We refer the interested reader to \cite{bakalov, turaev2016} for further information. 

Therefore, URFC completely captures an anyonic theory's expected properties and structures. However, as we mentioned, the category which describes an anyonic theory is often convoluted with the TQC category. In other words, it is often assumed the category of computation is similar to the category of anyons. Here, we argue that a better perspective to make sense of computation coherent with categorical quantum mechanics is considering TQC as a subcategory of the category of Hilbert spaces, \textbf{Hilb}. 
%In this framework, the matrix representations of $F$ and $R$ matrices have somewhat meaningful reasoning. Because in this context, $F$ and $R$ are defined a priori as linear maps between vector spaces, not general morphisms between objects. 

\section{Categorical Quantum Mechanics}
The idea of categorical quantum mechanics was proposed by Abramsky, and Coecke \cite{abramsky2004}. %It was further developed in \cite{coecke2018g}.
Categorical quantum mechanics essentially attempts to shift the perspective from quantum states to processes. In other words, everything is treated as a process (including states, which are just ``preparation'' processes), and quantum features start to appear as one considers the compositions of processes. For further discussions of this approach and the relationship between this picture and the usual formulation of quantum theory, see e.g.~\cite{bob2021}.  

%In addition to foundational purposes, CQM makes extensive use of \textit{string diagram} notation for representing morphisms in a monoidal category, upon which various \textit{graphical calculi} have been based. A graphical calculus consists of a set of generating morphisms in a monoidal category, subject to a collection of equational rules that can be used to reason about equality of morphisms.

%The general string diagrammatic language is applicable to any quantum system irrespective of their dimensions or specific properties.  For example, be a system in a mixed state or pure state, 2-dimensional or higher-dimensional, diagrammatic reasoning can be meaningfully applied to gain an intuition and sometimes even to predict the outcome. 

As we mentioned earlier, CQM aims to formulate quantum structures in terms of categorical ones, namely those of a dagger symmetric monoidal category. Well-developed examples of such a category are the category of relations, \textbf{Rel}, and the category of finite dimensional Hilbert spaces, \textbf{Hilb}. % \textbf{FdHilb} practically captures all properties of quantum systems, so we only discuss this category and to shorten the acronym to \textbf{Hilb}. 

The category of finite dimensional Hilbert spaces, \hilb, is a symmetric monoidal category, and as such, it comes with a convenient graphical language called \textit{string diagram notation} for representing morphisms. This notation has its roots in Feynman diagrams and Penrose's graphical notation for tensor contraction~\cite{penrose1971applications}, and was formalised in the 1990s by Joyal and Street for generic monoidal categories~\cite{joyal1991}. A notable feature of this notation is that diagrams that can be deformed into each other (i.e. that are topologically isotopic) describe the same morphism. This topological representation of morphisms in monoidal categories thus gives us a handle on the structure of braids and knots using categorical tools~\cite{turaev2016}.

A central feature of CQM is the reliance on \textit{graphical calculi}. A graphical calculus consists of a set of generating morphisms in a monoidal category, subject to a collection of equational rules that can be used to reason about equality of morphisms.
The most well-known of these is the ZX-calculus, a convenient language for quantum computations over qubits. The ZX-calculus can be seen as a strict superset of circuit notation, in the sense that quantum circuits can be readily interpreted as ZX-diagrams, but there furthermore exist many ZX-diagrams that do not correspond to circuits. 

Its generators can represent arbitrary linear maps between qubits in the category of Hilbert spaces (such as those coming from quantum circuits), and its rules have been used for many applications ranging from quantum circuit optimisation~\cite{duncan2020graphtheoretic,kissinger2019tcount,backens2021there} and measurement-based quantum computing~\cite{DP2,DuncanMBQC} to quantum error correction~\cite{kissinger2022phasefree,horsman2017surgery}. We outline a short summary of CQM and the ZX-calculus here, and in the next section, we explain how TQC fits within this picture. 

To put it more explicitly, a ZX-diagram consists of two kinds of nodes called \textit{Z spiders}~(green/lightly shaded nodes) and \textit{X spiders}~(red/darkly shaded nodes), which can be labelled by an angle, $\alpha \in [0, \pi]$,  and connected by wires, Equations \ref{eq:spiders}.

Concretely Z and X spiders are defined as the following linear maps:
\begin{equation}\label{eq:spiders}
\begin{aligned}
	%\small
\begin{tikzpicture}[scale=0.5]
	\begin{pgfonlayer}{nodelayer}
		\node [style=Z phase dot] (0) at (0, 0) {$\alpha$};
		\node [style=none] (1) at (1.25, 1) {};
		\node [style=none] (2) at (-1.25, 1) {};
		\node [style=none] (3) at (-1.25, -1) {};
		\node [style=none] (4) at (1.25, -1) {};
		\node [style=none] (5) at (1.25, 0.5) {};
		\node [style=none] (6) at (-1.25, 0.5) {};
		\node [style=none, rotate=90] (7) at (-1, -0.25) {...};
		\node [style=none, rotate=90] (8) at (1, -0.25) {...};
	\end{pgfonlayer}
	\begin{pgfonlayer}{edgelayer}
		\draw [in=-141, out=0, looseness=0.75] (3.center) to (0);
		\draw [in=180, out=-39, looseness=0.75] (0) to (4.center);
		\draw [in=180, out=22, looseness=0.75] (0) to (5.center);
		\draw [in=180, out=39, looseness=0.75] (0) to (1.center);
		\draw [in=0, out=158, looseness=0.75] (0) to (6.center);
		\draw [in=141, out=0, looseness=0.75] (2.center) to (0);
	\end{pgfonlayer}
\end{tikzpicture} \ &:= \ \ketbra{0...0}{0...0} +
	e^{i \alpha} \ketbra{1...1}{1...1} \\
	\begin{tikzpicture}[scale=0.5]
		\begin{pgfonlayer}{nodelayer}
			\node [style=X phase dot] (0) at (0, 0) {$\alpha$};
			\node [style=none] (1) at (1.25, 1) {};
			\node [style=none] (2) at (-1.25, 1) {};
			\node [style=none] (3) at (-1.25, -1) {};
			\node [style=none] (4) at (1.25, -1) {};
			\node [style=none] (5) at (1.25, 0.5) {};
			\node [style=none] (6) at (-1.25, 0.5) {};
			\node [style=none, rotate=90] (7) at (-1, -0.25) {...};
			\node [style=none, rotate=90] (8) at (1, -0.25) {...};
		\end{pgfonlayer}
		\begin{pgfonlayer}{edgelayer}
			\draw [in=-141, out=0, looseness=0.75] (3.center) to (0);
			\draw [in=180, out=-39, looseness=0.75] (0) to (4.center);
			\draw [in=180, out=22, looseness=0.75] (0) to (5.center);
			\draw [in=180, out=39, looseness=0.75] (0) to (1.center);
			\draw [in=0, out=158, looseness=0.75] (0) to (6.center);
			\draw [in=141, out=0, looseness=0.75] (2.center) to (0);
		\end{pgfonlayer}
	\end{tikzpicture}\ &:= \ \ketbra{+...+}{+...+} +
	e^{i \alpha} \ketbra{-...-}{-...-}
\end{aligned}
\end{equation}
Here, we have adopted ``bra-ket'' notation from quantum computing, where $|0\>$ and $|1\>$ represent the standard basis for $\mathbb C^2$, $|{+}\> = 1/\sqrt{2} (|0\> + |1\>)$ and $|{-}\> = 1/\sqrt{2}(|0\> - |1\>)$ the Hadamard (a.k.a. ``plus'') basis, $\<0|, \<1|, \<{+}|, \<{-}|$ their associated conjugate-transposes, and juxtaposition represents the tensor product of basis vectors.

\textit{ZX diagrams} are formed from spiders much like quantum circuits are built from gates: plugging two diagrams together sequentially represents composition of the associated linear maps and stacking diagrams on top of each other represents tensor product. Crossing wires represent the ``swap'' map $\mathbb C^2 \otimes \mathbb C^2 \to \mathbb C^2 \otimes \mathbb C^2$ which maps $|\psi\> \otimes |\phi\> \mapsto |\phi\> \otimes |\psi\>$.

% Spiders have  attached wires from left and right; wires entering from left are inputs and those exiting from right are outputs. Composition of spiders is done through gluing outputs of a diagram to inputs of another; and tensoring diagrams means stacking them vertically. Generally, there is no condition on the number of inputs and outputs. A spider with only one output represents a state and with only one input effect or measurement or dual of state. We depict Z spiders as green nodes and X spiders as red nodes~\footnote{If you are reading this in black and white or have limited colour vision, Z spiders are the lightly shaded nodes and X spiders are darkly shaded ones.}.

ZX diagrams obey a set of diagram rewrite rules called the \textit{ZX calculus}. One particular presentation of these rules is given in Figure~\ref{fig:zx-rules}.  The yellow box represents Hadamard matrix, which interchanges the two bases $\{|0\>,|1\>\}$ and $\{|{+}\>,|{-}\>\}$.

\begin{figure}
\centering
\scalebox{0.8}{
\begin{tikzpicture}[scale=0.5]
	\begin{pgfonlayer}{nodelayer}
		\node [style=Z phase dot] (0) at (-8, 3.5) {$\beta$};
		\node [style=none] (1) at (-7, 4) {};
		\node [style=none] (2) at (-9.75, 4) {};
		\node [style=none] (3) at (-9.75, 2.75) {};
		\node [style=none] (4) at (-7, 2.75) {};
		\node [style=none] (5) at (-7, 3.75) {};
		\node [style=none] (6) at (-9.75, 3.75) {};
		\node [style=none, rotate=90] (7) at (-10, 3.25) {...};
		\node [style=none, rotate=90] (8) at (-7, 3.25) {...};
		\node [style=none] (9) at (-10.5, 5.25) {};
		\node [style=Z phase dot] (10) at (-9.5, 5) {$\alpha$};
		\node [style=none] (11) at (-7.5, 5.5) {};
		\node [style=none] (12) at (-10.5, 5.5) {};
		\node [style=none] (13) at (-7.5, 4.25) {};
		\node [style=none, rotate=90] (14) at (-7.5, 4.75) {...};
		\node [style=none] (15) at (-7.5, 5.25) {};
		\node [style=none] (16) at (-10.5, 4.25) {};
		\node [style=none, rotate=90] (17) at (-10.25, 4.75) {...};
		\node [style=none] (18) at (-7, 4.25) {};
		\node [style=none] (19) at (-7, 5.5) {};
		\node [style=none] (20) at (-7, 5.25) {};
		\node [style=none] (21) at (-10.5, 3.75) {};
		\node [style=none] (22) at (-10.5, 2.75) {};
		\node [style=none] (23) at (-10.5, 4) {};
		\node [style=none] (24) at (-5.75, 4.25) {$\overset{=}{\text{f-rule}}$};
		\node [style=none, rotate=45] (25) at (-8.75, 4.25) {...};
		\node [style=none] (26) at (-4.5, 5.5) {};
		\node [style=none] (27) at (-1, 5) {};
		\node [style=none, rotate=90] (28) at (-4, 4) {...};
		\node [style=none] (29) at (-1, 2.75) {};
		\node [style=none, rotate=90] (30) at (-1.25, 4) {...};
		\node [style=none] (31) at (-4.5, 5) {};
		\node [style=Z phase dot] (32) at (-2.75, 4.25) {$\ \alpha\!+\!\beta\ $};
		\node [style=none] (33) at (-1, 5.5) {};
		\node [style=none] (34) at (-4.5, 2.75) {};
		\node [style=Z phase dot] (36) at (-3.25, -0.25) {$\ (\textrm{-}1)^a \alpha\ $};
		\node [style=none] (37) at (-1, 1.5) {};
		\node [style=none] (38) at (-5.75, -0.25) {$\overset{=}{\pi\text{-rule}}$};
		\node [style=none] (39) at (-4.5, -0.25) {};
		\node [style=none] (40) at (-1, -1.5) {};
		\node [style=X phase dot] (41) at (-1.75, 0.5) {$a\pi$};
		\node [style=X phase dot] (42) at (-1.75, -1.5) {$a\pi$};
		\node [style=X phase dot] (43) at (-9.25, -0.25) {$a \pi$};
		\node [style=none] (44) at (-6.75, 1.25) {};
		\node [style=Z phase dot] (45) at (-8.25, -0.25) {$\alpha$};
		\node [style=none, rotate=90] (46) at (-7, -0.5) {...};
		\node [style=none] (47) at (-1, 0.5) {};
		\node [style=none, rotate=90] (48) at (-1.75, -0.5) {...};
		\node [style=none] (49) at (-6.75, 0.25) {};
		\node [style=none] (50) at (-10, -0.25) {};
		\node [style=none] (51) at (-6.75, -1.5) {};
		\node [style=X phase dot] (52) at (-1.75, 1.5) {$a\pi$};
		\node [style=X phase dot] (56) at (8, -3.5) {$a\pi$};
		\node [style=Z phase dot] (57) at (3, -4.25) {$\alpha$};
		\node [style=none] (58) at (4.25, -5) {};
		\node [style=none] (59) at (5.5, -4.25) {$=$};
		\node [style=none] (60) at (9, -5) {};
		\node [style=none] (61) at (4.25, -3.5) {};
		\node [style=none] (64) at (9, -3.5) {};
		\node [style=X phase dot] (65) at (8, -5) {$a\pi$};
		\node [style=X phase dot] (68) at (2, -4.25) {$a\pi$};
		\node [style=hadamard] (69) at (8.5, 5.25) {};
		\node [style=Z phase dot] (70) at (2.75, 4) {$\alpha$};
		\node [style=none] (71) at (6.5, 4) {};
		\node [style=none, rotate=90] (72) at (8.5, 3.75) {...};
		\node [style=none] (73) at (5.5, 4) {$=$};
		\node [style=hadamard] (74) at (8.5, 3) {};
		\node [style=none] (75) at (1.25, 4) {};
		\node [style=X phase dot] (76) at (7.5, 4) {$\alpha$};
		\node [style=none] (77) at (9, 5.25) {};
		\node [style=hadamard] (78) at (2, 4) {};
		\node [style=none] (79) at (9, 4.5) {};
		\node [style=hadamard] (80) at (8.5, 4.5) {};
		\node [style=none] (81) at (4.25, 3) {};
		\node [style=none] (82) at (9, 3) {};
		\node [style=none, rotate=90] (83) at (4, 3.75) {...};
		\node [style=none] (84) at (4.25, 5.25) {};
		\node [style=none] (85) at (4.25, 4.5) {};
		\node [style=Z dot] (87) at (3.5, 0.75) {};
		\node [style=none] (88) at (2.75, 0.75) {};
		\node [style=none] (89) at (4.25, 0.75) {};
		\node [style=none] (90) at (6.75, 0.75) {};
		\node [style=none] (91) at (7.75, 0.75) {};
		\node [style=none] (93) at (5.5, 0.75) {$=$};
		\node [style=none] (94) at (5.5, -1.25) {$=$};
		\node [style=none] (96) at (7.75, -1.25) {};
		\node [style=none] (97) at (4.25, -1.25) {};
		\node [style=none] (98) at (2.75, -1.25) {};
		\node [style=none] (99) at (6.75, -1.25) {};
		\node [style=hadamard] (100) at (3.25, -1.25) {};
		\node [style=hadamard] (101) at (3.75, -1.25) {};
		\node [style=X dot] (103) at (-1.75, -5) {};
		\node [style=none] (104) at (-1, -3.5) {};
		\node [style=X dot] (105) at (-8.75, -4.25) {};
		\node [style=Z dot] (106) at (-7.5, -4.25) {};
		\node [style=none] (107) at (-4, -5) {};
		\node [style=none] (108) at (-9.5, -3.5) {};
		\node [style=none] (109) at (-4, -3.5) {};
		\node [style=Z dot] (110) at (-3.25, -3.5) {};
		\node [style=none] (111) at (-6.75, -3.5) {};
		\node [style=none] (112) at (-5.75, -4.25) {$=$};
		\node [style=none] (113) at (-9.5, -5) {};
		\node [style=X dot] (114) at (-1.75, -3.5) {};
		\node [style=none] (115) at (-1, -5) {};
		\node [style=Z dot] (116) at (-3.25, -5) {};
		\node [style=none] (117) at (-6.75, -5) {};
		\node [style=none] (118) at (-3.75, -1.5) {$e^{ia\alpha}$};
		\node [style=none] (119) at (6.625, -4.25) {{ $\frac{e^{ia\alpha}}{\sqrt{2}}$ }};
		\node [style=none] (120) at (-4.5, -4.25) {{\scriptsize $\sqrt{2}$}};
	\end{pgfonlayer}
	\begin{pgfonlayer}{edgelayer}
		\draw [style=simple, in=-141, out=0, looseness=0.75] (3.center) to (0);
		\draw [style=simple, in=180, out=-39, looseness=0.75] (0) to (4.center);
		\draw [style=simple, in=180, out=22, looseness=0.75] (0) to (5.center);
		\draw [style=simple, in=180, out=39, looseness=0.75] (0) to (1.center);
		\draw [style=simple, in=0, out=180] (0) to (6.center);
		\draw [style=simple, in=165, out=0] (2.center) to (0);
		\draw [style=simple, in=-141, out=0, looseness=0.75] (16.center) to (10);
		\draw [style=simple, in=180, out=-15] (10) to (13.center);
		\draw [style=simple, in=180, out=22] (10) to (15.center);
		\draw [style=simple, in=180, out=39, looseness=0.75] (10) to (11.center);
		\draw [style=simple, in=0, out=158, looseness=0.75] (10) to (9.center);
		\draw [style=simple, in=141, out=0, looseness=0.75] (12.center) to (10);
		\draw [style=simple, bend left] (10) to (0);
		\draw [style=simple] (11.center) to (19.center);
		\draw [style=simple] (15.center) to (20.center);
		\draw [style=simple] (13.center) to (18.center);
		\draw [style=simple] (23.center) to (2.center);
		\draw [style=simple] (21.center) to (6.center);
		\draw [style=simple] (22.center) to (3.center);
		\draw [style=simple, bend left] (0) to (10);
		\draw [style=simple, in=-120, out=0] (34.center) to (32);
		\draw [style=simple, in=180, out=-60] (32) to (29.center);
		\draw [style=simple, in=180, out=45, looseness=0.75] (32) to (27.center);
		\draw [style=simple, in=180, out=60] (32) to (33.center);
		\draw [style=simple, in=0, out=135, looseness=0.75] (32) to (31.center);
		\draw [style=simple, in=120, out=0] (26.center) to (32);
		\draw [style=simple, in=180, out=-60, looseness=0.75] (45) to (51.center);
		\draw [style=simple, in=180, out=45, looseness=0.75] (45) to (49.center);
		\draw [style=simple, in=180, out=75, looseness=0.75] (45) to (44.center);
		\draw [style=simple, in=180, out=0, looseness=0.50] (43) to (45);
		\draw [style=simple] (50.center) to (43);
		\draw [style=simple, in=-60, out=180, looseness=0.75] (42) to (36);
		\draw [style=simple, in=0, out=180, looseness=0.75] (36) to (39.center);
		\draw [style=simple, in=180, out=30] (36) to (41);
		\draw [style=simple, in=60, out=180, looseness=0.75] (52) to (36);
		\draw [style=simple] (37.center) to (52);
		\draw [style=simple] (47.center) to (41);
		\draw [style=simple] (40.center) to (42);
		\draw [style=simple, in=180, out=-60, looseness=0.75] (57) to (58.center);
		\draw [style=simple, in=180, out=60, looseness=0.75] (57) to (61.center);
		\draw [style=simple, in=180, out=0, looseness=0.50] (68) to (57);
		\draw [style=simple] (64.center) to (56);
		\draw [style=simple] (60.center) to (65);
		\draw [style=simple, in=180, out=-60, looseness=0.75] (70) to (81.center);
		\draw [style=simple, in=180, out=45, looseness=0.75] (70) to (85.center);
		\draw [style=simple, in=180, out=75, looseness=0.75] (70) to (84.center);
		\draw [style=simple, in=180, out=0, looseness=0.75] (78) to (70);
		\draw [style=simple] (75.center) to (78);
		\draw [style=simple, in=-60, out=180, looseness=0.75] (74) to (76);
		\draw [style=simple, in=0, out=180, looseness=0.75] (76) to (71.center);
		\draw [style=simple, in=180, out=45, looseness=0.75] (76) to (80);
		\draw [style=simple, in=75, out=180, looseness=0.75] (69) to (76);
		\draw [style=simple] (77.center) to (69);
		\draw [style=simple] (79.center) to (80);
		\draw [style=simple] (82.center) to (74);
		\draw (88.center) to (89.center);
		\draw (90.center) to (91.center);
		\draw (98.center) to (97.center);
		\draw (99.center) to (96.center);
		\draw [style=simple] (116) to (114);
		\draw [style=simple] (103) to (110);
		\draw [style=simple] (110) to (114);
		\draw [style=simple] (116) to (103);
		\draw [style=simple] (115.center) to (103);
		\draw [style=simple] (107.center) to (116);
		\draw [style=simple] (110) to (109.center);
		\draw [style=simple] (114) to (104.center);
		\draw [style=simple, bend right] (113.center) to (105);
		\draw [style=simple, bend right] (105) to (108.center);
		\draw [style=simple] (105) to (106);
		\draw [style=simple, bend right] (106) to (117.center);
		\draw [style=simple, bend left] (106) to (111.center);
	\end{pgfonlayer}
\end{tikzpicture}}
\caption{
A convenient presentation for the ZX-calculus. These rules hold
for all $\alpha, \beta \in [0, 2 \pi)$ and $a\in\{0,1\}$. They also hold with the colours (red and green) interchanged and with the inputs and outputs permuted arbitrarily. Note all spiders are labeled by angles, i.e. real numbers taken modulo $2\pi$, and a spider with no label is assumed to have an angle of $0$.}\label{fig:zx-rules}
\end{figure}
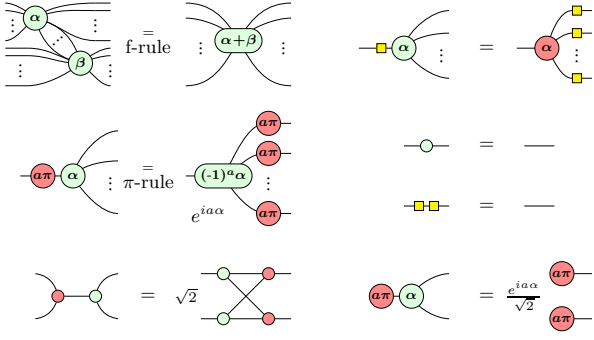

These rules can be applied while transforming diagrams back and forth to each other. While we won't be particularly interested in applying all of the rules in the calculus here, we specifically use ``f-rule,'' also known as spider fusion and ``$\pi$-rule''. ``f-rule'' adds phases of spiders with the same colour connected via wires, and introduces another spider with the same colour and addition of phases. If a $\pi$-spider is adjacent to an opposite colour spider,``$\pi$-rule,'' states the $\pi$-spider passes through the spider but changes the phase. We use a combination of these rules in Section~\ref{sec:braid-relations}.

We are furthermore particularly interested in the so-called ``P-rule'', which has appeared in proposed extensions to the rules in Fig.~\ref{fig:zx-rules} such as the complete graphical calculus proposed in \cite{vilmart2019near}, due to its relationship to topological quantum computing, which we will explain in the next section. 

Before giving the P-rule in generality, we will give a special case, which relates two equivalent representations of the Hadamard matrix in terms of three consecutive Z and X spiders with $\pi/2$ angles. 

\begin{definition}(Hadamard rule)
A Hadamard matrix has an equivalent ZX-representation of three $\frac{\pi}{2}$ (Z-X-Z) or (X-Z-X) spiders:
	\begin{equation}\label{eq:had-rule}
	H \ =\ 
\begin{tikzpicture}[scale=0.5]
	\begin{pgfonlayer}{nodelayer}
		\node [style=Z phase dot] (0) at (-4, 0) {$\frac\pi2$};
		\node [style=X phase dot] (1) at (-3, 0) {$\frac\pi2$};
		\node [style=Z phase dot] (2) at (-2, 0) {$\frac\pi2$};
		\node [style=none] (3) at (-5, 0) {};
		\node [style=none] (4) at (-1, 0) {};
		\node [style=none] (5) at (0, 0) {$=$};
		\node [style=none] (6) at (1, 0) {};
		\node [style=X phase dot] (7) at (2, 0) {$\frac\pi2$};
		\node [style=X phase dot] (8) at (4, 0) {$\frac\pi2$};
		\node [style=Z phase dot] (9) at (3, 0) {$\frac\pi2$};
		\node [style=none] (10) at (5, 0) {};
	\end{pgfonlayer}
	\begin{pgfonlayer}{edgelayer}
		\draw (0) to (2);
		\draw (3.center) to (0);
		\draw (2) to (4.center);
		\draw (6.center) to (10.center);
	\end{pgfonlayer}
\end{tikzpicture}
	\end{equation}
\end{definition}

This ability to swap a Z-X-Z chain of spiders for an X-Z-X chain holds not just for $\pi/2$ angles, but in general. However, the expression becomes more complicated.

\begin{definition}($P$-rule)
	Given a composition of (Z-X-Z) spiders, by using relation between angles below, one can find an equivalent combination of (X-Z-X) spiders and vice versa. 
	\begin{equation}
\begin{tikzpicture}[scale=0.5]
	\begin{pgfonlayer}{nodelayer}
		\node [style=Z phase dot] (0) at (-4, 0) {$\alpha$};
		\node [style=X phase dot] (1) at (-3, 0) {$\beta$};
		\node [style=Z phase dot] (2) at (-2, 0) {$\gamma$};
		\node [style=none] (3) at (-5, 0) {};
		\node [style=none] (4) at (-1, 0) {};
		\node [style=none] (5) at (0, 0) {$=$};
		\node [style=none] (6) at (1, 0) {};
		\node [style=X phase dot] (7) at (2, 0) {$\alpha'$};
		\node [style=X phase dot] (8) at (4, 0) {$\gamma'$};
		\node [style=Z phase dot] (9) at (3, 0) {$\beta'$};
		\node [style=none] (10) at (5, 0) {};
	\end{pgfonlayer}
	\begin{pgfonlayer}{edgelayer}
		\draw (0) to (2);
		\draw (3.center) to (0);
		\draw (2) to (4.center);
		\draw (6.center) to (10.center);
	\end{pgfonlayer}
\end{tikzpicture}
	\end{equation}
	\begin{align}
		& z = cos(\frac{\beta}{2})cos(\frac{\alpha+\gamma}{2})+isin(\frac{\beta}{2})cos(\frac{\alpha-\gamma}{2}) \\
		& z' = cos(\frac{\beta}{2})sin(\frac{\alpha+\gamma}{2})-isin(\frac{\beta}{2})sin(\frac{\alpha-\gamma}{2}) 
	\end{align}
	The equivalent three angles are:
	
	\[ (\alpha', \beta', \gamma')= \begin{cases} 
		\alpha' = arg(z)+arg(z')\\
		\beta'= 2arg(\frac{|z|}{|z'|}+i)\\
		\gamma'=  arg(z)-arg(z')
	\end{cases}
	\]
\end{definition}
\section{The Relationship Between TQC and CQM}
In Section~\ref{sec:level2}, we discussed the properties and structures of unitary ribbon fusion categories. Also, we briefly touched on the fact that the category for TQC is sometimes convoluted with the category describing anyonic theories. Given the structure of categorical quantum computing, the question of interaction between TQC and CQM arises. We show TQC category is a subcategory of \hilb.

Knowing this fact, another exciting area to explore is the representation of TQC models with the $ZX$-calculus.  We show this explicitly for the Fibonacci and Ising models, introduce new relations, and conclude that while the Ising model is only the Clifford fragment of the $ZX$-calculus, the Fibonacci model represents a new fragment. This further results in the introduction of a new P-rule. 

Let us review the elements of TQC. For quantum computation, we need a well-defined anyonic theory. This means we have a set of anyon types, fusion rules which essentially specify the outcomes of the tensor product between each pair of objects, and solutions of pentagonal and hexagonal equations. 

The computational space is not an anyonic space; instead, it is the fusion space,$V^k_{ij}$. Physically, we encode qubits in the processes between anyons. So if the outcome of $(X_i \otimes X_j) \otimes X_k$ has two possibilities, either with a single outcome with a multiplicity coefficient $N_{ij}^k=2$ or with more than one outcome, then one can take $V_{ij}^k$ as the  computation space. 

Having this picture in mind, then solutions of pentagonal equations, which we call F-matrices, are represented as matrices and linear isomorphisms between two isomorphic vector spaces $V^l_{(ij)k}\cong V^l_{i(jk)}$. However, the building blocks of the spaces where we perform computations are fusion spaces $V^k_{ij}$, so any other space has an equivalent direct sum of these fusion spaces. 
\begin{equation}
	V_{(ij)k}^l \cong \bigoplus_{ek} V_{ij}^e \otimes V_{ek}^l
\end{equation}
Similar to F-matrices, it should not be surprising that hexagonal equations' solutions, which we call R-matrices, are linear isomorphisms between two isomorphic spaces $V_{ij}^k \cong V_{ji}^k$. With these information and physical properties in mind, in the following we define the TQC category, 

\begin{definition} \label{def-TQC}
Given a well-defined anyonic theory or category, a TQC category consists of the following objects and morphisms:
\begin{itemize}
\item \textbf{Objects} direct sums and tensor products of fusion spaces $V_{ij}^k$. 
\item \textbf{Morphisms} $R$ matrices, $F$ matrices, compositions and tensor products thereof.
\end{itemize}
\end{definition} 
Pentagonal and hexagonal equations can be rewritten by substituting anyons with fusion spaces; one must solve these equations to find F and R solutions explicitly. There are some simplifying rules which reduce the number of equations, for example, in the Fibonacci case from 32 to 1. Because every $F-$matrices has 5 indices and each index can be either Fibonacci anyon or vacuum, but any matrix with at least one vacuum index is identity, so only one non-trivial equation remains to solve, for further information, see \cite{mythesis}. This category is furthermore semi-simple: simple objects are fusion spaces, and any other space is a direct sum of fusion spaces. 
\begin{equation}
	V_{i_1..i_n}^j \cong \bigoplus_{k_1...k_{n-1}} V_{i_1i_2}^{k_1} \otimes  V_{k_1i_3}^{k_2}\otimes  ... \otimes V_{k_{n-1}i_n}^{j}
\end{equation}

Note that Definition~\ref{def-TQC} gives us a computational category different from a general fusion category, Definition~\ref{def-rfc}. Having this definition, we work with objects of a category rather than hom-sets as computational spaces. The latter has been the definition of an anyonic category as well as a TQC category in literature; see, for example, Section 7.1 of \cite{wang2010}.  

Because fusion spaces are hom-sets of a URFC, they come equipped with a well-defined inner product, transforming them into Hilbert spaces. So the relationship between CQM and TQC should be evident by now. Given the definition above, this category is also closed under the tensor product, and the direct sum and the F- and R-matrices are well-behaved. Therefore, the natural next step is to develop the ZX-representation of TQC. We show our attempt at two well-studied models, Ising and Fibonacci. With our definition of a TQC category, it should be evident by now that it is indeed a subcategory of \hilb, and the functor is an inclusion. 

\subsection{The ZX-Representation of TQC}
To build a computational space, we take non-abelian anyons whose fusion outcomes have more than one outcome or an outcome with a fusion coefficient $N_{ij}^k \geq 1$. In both cases here, i.e. Ising and Fibonacci, we take three Ising or Fibonacci anyons, let's call them $a$, and form the computational space as follows: 
\[
V_{aaa}^a = \oplus_x V_{aa}^x \otimes V_{xa}^a
\]
We then construct all unitary operations from associativity and braiding matrices. The $F$-matrix resulting from solving the pentagonal equations changes the basis. 
\begin{figure}[!h]
\begin{tikzpicture}
	\begin{pgfonlayer}{nodelayer}
		\node [style=X dot] (0) at (-2.5, 0) {};
		\node [style=X dot] (1) at (-1.5, 0) {};
		\node [style=X dot] (2) at (0, 0) {};
		\node [style=none] (5) at (0.5, 0) {};
		\node [style=none] (6) at (-3, 0) {};
		\node [style=X dot] (7) at (-2.5, 3) {};
		\node [style=X dot] (8) at (-1, 3) {};
		\node [style=X dot] (9) at (0, 3) {};
		\node [style=none] (12) at (0.5, 3) {};
		\node [style=none] (13) at (-3, 3) {};
		\node [style=none] (16) at (-3.5, 1.5) {};
		\node [style=none] (17) at (-3.5, 1.5) {$(F^a_{aaa})_{xy}=$};
		\node [style=none] (18) at (-1.25, 0.5) {$x$};
		\node [style=none] (19) at (-1.25, 3.5) {$y$};
		\node [style=none] (20) at (-1.75, 4.25) {};
		\node [style=none] (21) at (-1.75, -1.25) {};
		\node [style=none] (22) at (-1.75, 4.25) {$a$};
		\node [style=none] (23) at (-1.75, -1.25) {$a$};
		\node [style=none] (26) at (-1.5, 3) {};
		\node [style=none] (27) at (-1, 0) {};
		\node [style=none] (28) at (-1.25, 1.25) {};
		\node [style=none] (29) at (-1.25, 1.75) {};
	\end{pgfonlayer}
	\begin{pgfonlayer}{edgelayer}
		\draw [bend left=45, looseness=1.25] (6.center) to (5.center);
		\draw [in=-135, out=-45, looseness=1.25] (6.center) to (5.center);
		\draw [bend left=45, looseness=1.25] (13.center) to (12.center);
		\draw [in=-135, out=-45, looseness=1.25] (13.center) to (12.center);
		\draw [in=135, out=45, looseness=1.25] (26.center) to (12.center);
		\draw [bend right=45, looseness=1.25] (26.center) to (12.center);
		\draw [bend left=45, looseness=1.25] (6.center) to (27.center);
		\draw [bend right=60] (6.center) to (27.center);
		\draw [style=diredge] (28.center) to (29.center);
	\end{pgfonlayer}
\end{tikzpicture}
\end{figure}\\
Braiding matrices braid or swap anyons. $R_1$'s swap the first two anyons and $R_2$'s the last two anyons. 
\begin{figure}[!h]
\begin{tikzpicture}
	\begin{pgfonlayer}{nodelayer}
		\node [style=X dot] (0) at (-2, 0) {};
		\node [style=X dot] (1) at (-1, 0) {};
		\node [style=X dot] (2) at (0, 0) {};
		\node [style=none] (3) at (-2.5, 0) {};
		\node [style=none] (4) at (-0.5, 0) {};
		\node [style=none] (5) at (0.5, 0) {};
		\node [style=none] (6) at (-3, 0) {};
		\node [style=X dot] (7) at (-2, 3) {};
		\node [style=X dot] (8) at (-1, 3) {};
		\node [style=X dot] (9) at (0, 3) {};
		\node [style=none] (10) at (-2.5, 3) {};
		\node [style=none] (11) at (-0.5, 3) {};
		\node [style=none] (12) at (0.5, 3) {};
		\node [style=none] (13) at (-3, 3) {};
		\node [style=none] (14) at (-1.5, 1.25) {};
		\node [style=none] (15) at (-1.75, 1.5) {};
		\node [style=none] (16) at (-3.5, 1.5) {};
		\node [style=none] (17) at (-3, 1.5) {$R_{1}=$};
		\node [style=none] (18) at (-1, -0.5) {$x$};
		\node [style=none] (19) at (-1.25, 3.75) {$x$};
	\end{pgfonlayer}
	\begin{pgfonlayer}{edgelayer}
		\draw [in=135, out=45] (3.center) to (4.center);
		\draw [bend right=45] (3.center) to (4.center);
		\draw [bend left=45, looseness=1.25] (6.center) to (5.center);
		\draw [in=-135, out=-45, looseness=1.25] (6.center) to (5.center);
		\draw [in=135, out=45] (10.center) to (11.center);
		\draw [bend right=45] (10.center) to (11.center);
		\draw [bend left=45, looseness=1.25] (13.center) to (12.center);
		\draw [in=-135, out=-45, looseness=1.25] (13.center) to (12.center);
		\draw [in=-90, out=90] (0) to (8);
		\draw [bend right, looseness=1.25] (1) to (14.center);
		\draw [bend left=15] (15.center) to (7);
	\end{pgfonlayer}
\end{tikzpicture}
\begin{tikzpicture}
	\begin{pgfonlayer}{nodelayer}
		\node [style=X dot] (0) at (-2.25, 0) {};
		\node [style=X dot] (1) at (-1, 0) {};
		\node [style=X dot] (2) at (0, 0) {};
		\node [style=none] (5) at (0.5, 0) {};
		\node [style=none] (6) at (-3, 0) {};
		\node [style=X dot] (7) at (-2.25, 3) {};
		\node [style=X dot] (8) at (-1, 3) {};
		\node [style=X dot] (9) at (0, 3) {};
		\node [style=none] (12) at (0.5, 3) {};
		\node [style=none] (13) at (-3, 3) {};
		\node [style=none] (17) at (-2.25, 1.5) {$R_{2}=$};
		\node [style=none] (18) at (-1, -0.75) {$y$};
		\node [style=none] (19) at (-1.25, 3.5) {$y$};
		\node [style=none] (26) at (-1.5, 3) {};
		\node [style=none] (27) at (-1.5, 0) {};
		\node [style=none] (28) at (-0.25, 1.25) {};
		\node [style=none] (29) at (-0.5, 1.5) {};
	\end{pgfonlayer}
	\begin{pgfonlayer}{edgelayer}
		\draw [bend left=45, looseness=1.25] (6.center) to (5.center);
		\draw [in=-135, out=-45, looseness=1.25] (6.center) to (5.center);
		\draw [bend left=45, looseness=1.25] (13.center) to (12.center);
		\draw [in=-135, out=-45, looseness=1.25] (13.center) to (12.center);
		\draw [in=135, out=45, looseness=1.25] (26.center) to (12.center);
		\draw [bend right=45, looseness=1.25] (26.center) to (12.center);
		\draw [in=135, out=45, looseness=1.25] (27.center) to (5.center);
		\draw [in=-120, out=-60] (27.center) to (5.center);
		\draw [in=-75, out=75] (1) to (9);
		\draw [in=-30, out=60, looseness=0.75] (2) to (28.center);
		\draw [bend left, looseness=0.75] (29.center) to (8);
	\end{pgfonlayer}
\end{tikzpicture}
\end{figure}
It should be clear that $R_2$ can be constructed as a composition of an $F$-matrix followed by an $R$-matrix and followed by an $F^{-1}$-matrix to return to the first basis we started from. 
\[
R_2 = F^{-1} R_1 F
\]
Early results such as \cite{freedman2002modular} specify the universal classes of the TQC model. Later papers such as \cite{bonesteel2005braid, fan2010braid} suggested a more constructive approach. We restrict ourselves to Ising and Fibonacci anyons and, in a constructive approach, build the ZX-representation of their braid groups up to six strands. 

\subsubsection{Ising Model}
Ising anyons, also known as Majorana fermions, are the simplest realization of non-Abelian anyons \cite{sarma2015majorana}. The Ising model includes two non-trivial anyons $\{\sigma,\psi\}$ and the non-trivial fusion rules are as below:
\begin{align}
	& \sigma \otimes \sigma = 1 \oplus \psi, && \sigma \otimes \psi = \sigma, && \psi \otimes \psi = 1 &
\end{align}
where $1$ is the trivial anyon.
As should be clear from these rules, $\sigma$ is the non-Abelian anyon we use for encoding. It is called Ising or Majorana fermion because of the similarity of Ising anyons statistics with Majorana fermions. 
To encode qubits, we need three Ising anyons so that based on their internal charges, we can map them to qubits, $\{|1\psi \rangle = |0\rangle,|\psi \sigma \rangle = |1 \rangle \}$. 

Fixing this basis, the computational Hilbert space is $H= \oplus_x V_{\sigma \sigma}^x \otimes V_{x \sigma}^\sigma$. One can solve pentagonal and hexagonal equations to find $F$ and $R$ matrices \cite{wang2010}, 
\begin{align}
& F^{isg}=H=\frac{1}{\sqrt{2}} \begin{pmatrix}
		1 & 1 \\ 1 & -1 
	\end{pmatrix}, \\
& R_1^{isg}=-e^{\frac{\pi i}{8}} \begin{pmatrix}
		1 & 0 \\ 0 & -i 
	\end{pmatrix},\\
& R_2^{isg}=-\frac{e^{\frac{-\pi i}{8}}}{\sqrt{2}} \begin{pmatrix}
		1 & i \\ i & 1 
\end{pmatrix},&
\end{align}
Comparing with equation~\eqref{eq:spiders}, we see that  $R^{isg}_1$ matrix is a Z-spider with one input, one output, and a phase angle of $\frac{-\pi}{2}$. We mentioned before that Hadamard also has a $ZX$-representation as a combination of three $\frac{\pi}{2}$ angles. So $R_2^{isg}=HR_1^{isg}H$ gives the other braid generator~\ref{fig:zx-of-ising}. 
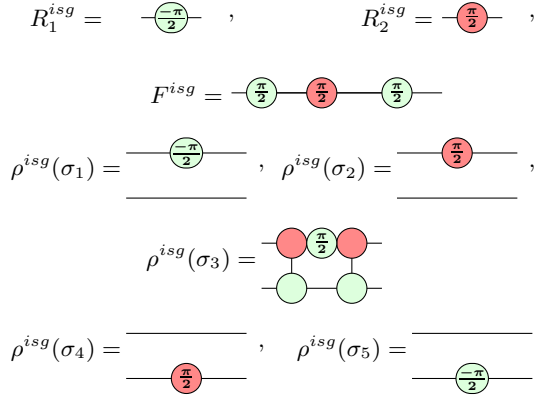
\begin{figure}
\begin{tikzpicture}[scale=0.8]
	\begin{pgfonlayer}{nodelayer}
		\node [style=none] (0) at (-7, 2) {$R_{1}^{isg}=$};
		\node [style=none] (1) at (-5.75, 2) {};
		\node [style=none] (2) at (-5.25, 2) {};
		\node [style=none] (3) at (-4.75, 2) {};
		\node [style=none] (4) at (-4.25, 2) {$,$};
		\node [style=none] (5) at (-1.5, 2) {$R_{2}^{isg}=$};
		\node [style=none] (6) at (-0.75, 2) {};
		\node [style=none] (7) at (-0.25, 2) {};
		\node [style=none] (8) at (0.25, 2) {};
		\node [style=none] (10) at (-5, 0.75) {$F^{isg}=$};
		\node [style=none] (17) at (-6, -0.25) {};
		\node [style=none] (18) at (-4, -0.25) {};
		\node [style=none] (19) at (-6, -1) {};
		\node [style=none] (20) at (-4, -1) {};
		\node [style=none] (22) at (-7, -0.5) {$\rho^{isg}(\sigma_1)=$};
		\node [style=none] (23) at (-3.75, -0.5) {$,$};
		\node [style=none] (24) at (-1.5, -0.25) {};
		\node [style=none] (25) at (0.5, -0.25) {};
		\node [style=none] (26) at (-1.5, -1) {};
		\node [style=none] (27) at (0.5, -1) {};
		\node [style=none] (29) at (-2.5, -0.5) {$\rho^{isg}(\sigma_2)=$};
		\node [style=none] (30) at (0.75, -0.5) {$,$};
		\node [style=none] (31) at (-3.75, -1.75) {};
		\node [style=none] (32) at (-1.75, -1.75) {};
		\node [style=none] (33) at (-3.75, -2.5) {};
		\node [style=none] (34) at (-1.75, -2.5) {};
		\node [style=none] (36) at (-4.75, -2) {$\rho^{isg}(\sigma_3)=$};
		\node [style=none] (38) at (-1.25, -3.25) {};
		\node [style=none] (39) at (0.75, -3.25) {};
		\node [style=none] (40) at (-1.25, -4) {};
		\node [style=none] (41) at (0.75, -4) {};
		\node [style=none] (43) at (-2.25, -3.5) {$\rho^{isg}(\sigma_5)=$};
		\node [style=none] (44) at (-6, -3.25) {};
		\node [style=none] (45) at (-4, -3.25) {};
		\node [style=none] (46) at (-6, -4) {};
		\node [style=none] (47) at (-4, -4) {};
		\node [style=none] (49) at (-7, -3.5) {$\rho^{isg}(\sigma_4)=$};
		\node [style=none] (51) at (0.75, 2) {};
		\node [style=none] (52) at (0.75, 2) {$,$};
		\node [style=none] (54) at (-2.25, -1.75) {};
		\node [style=none] (55) at (-2.25, -2.5) {};
		\node [style=none] (56) at (-3.25, -1.75) {};
		\node [style=none] (57) at (-3.25, -2.5) {};
		\node [style=Z phase dot] (58) at (-5.25, 2) {$\frac{-\pi}{2}$};
		\node [style=X phase dot] (59) at (-0.25, 2) {$\frac{\pi}{2}$};
		\node [style=Z phase dot] (60) at (-3.75, 0.75) {$\frac{\pi}{2}$};
		\node [style=Z phase dot] (61) at (-1.5, 0.75) {$\frac{\pi}{2}$};
		\node [style=X phase dot] (62) at (-2.75, 0.75) {$\frac{\pi}{2}$};
		\node [style=X phase dot] (63) at (-0.5, -0.25) {$\frac{\pi}{2}$};
		\node [style=X phase dot] (64) at (-3.25, -1.75) {};
		\node [style=X phase dot] (65) at (-2.25, -1.75) {};
		\node [style=X phase dot] (67) at (-5, -4) {$\frac{\pi}{2}$};
		\node [style=Z phase dot] (68) at (-5, -0.25) {$\frac{-\pi}{2}$};
		\node [style=Z phase dot] (69) at (-3.25, -2.5) {};
		\node [style=Z phase dot] (70) at (-2.25, -2.5) {};
		\node [style=Z phase dot] (71) at (-2.75, -1.75) {$\frac{\pi}{2}$};
		\node [style=Z phase dot] (72) at (-0.25, -4) {$\frac{-\pi}{2}$};
		\node [style=none] (73) at (-3.25, -3.5) {};
		\node [style=none] (74) at (-3.75, -3.5) {$,$};
		\node [style=none] (75) at (-4.25, 0.75) {};
		\node [style=none] (76) at (-0.75, 0.75) {};
	\end{pgfonlayer}
	\begin{pgfonlayer}{edgelayer}
		\draw (17.center) to (18.center);
		\draw (19.center) to (20.center);
		\draw (24.center) to (25.center);
		\draw (26.center) to (27.center);
		\draw (31.center) to (32.center);
		\draw (33.center) to (34.center);
		\draw (44.center) to (45.center);
		\draw (46.center) to (47.center);
		\draw (38.center) to (39.center);
		\draw (40.center) to (41.center);
		\draw (56.center) to (57.center);
		\draw (54.center) to (55.center);
		\draw (1.center) to (3.center);
		\draw (6.center) to (8.center);
		\draw (60) to (61);
		\draw (75.center) to (76.center);
	\end{pgfonlayer}
\end{tikzpicture}
\caption{The ZX-representation of Ising anyons gates.}\label{fig:zx-of-ising}
\end{figure}

For braid groups on 6-strands or 2-qubit gates, we need to find the matrix representation of the braid group's generators, $B_6$. While it might be conceivable that any set of generators on six strands is a union of generators on three strands, to find out the braid generator for the middle two strands, i.e. strands number 3 and 4, we have to work with $B_6$.  Based on the general relation between Temperley-Lieb algebra and braid groups, the authors of \cite{fan2010braid} present a 4-dimensional representation of the Ising braid group on six strands or anyons as follows:
\begin{align*}
	& \rho^{isg}(\sigma_1) = exp(\frac{\pi i}{8}) \begin{pmatrix}
		-1 & 0 & 0 & 0 \\ 0 & -1 & 0 & 0 \\ 0 & 0 & i & 0 \\ 0 & 0 & 0 & i
	\end{pmatrix}, \\
	& \rho^{isg}(\sigma_2) = -\frac{exp(\frac{-\pi i}{8})}{\sqrt{2}} \begin{pmatrix}
		1 & 0 & i & 0 \\ 0 & 1 & 0 & i \\ i & 0 & 1 & 0 \\ 0 & i & 0 & 1
	\end{pmatrix} \\
	& \rho^{isg}(\sigma_3) = exp(\frac{\pi i}{8}) \begin{pmatrix}
		-1 & 0 & 0 & 0 \\ 0 & i & 0 & 0 \\ 0 & 0 & i & 0 \\ 0 & 0 & 0 & -1
	\end{pmatrix}, \\
	& \rho^{isg}(\sigma_4) = -\frac{exp(\frac{-\pi i}{8})}{\sqrt{2}} \begin{pmatrix}
		1 & i & 0 & 0 \\ i & 1 & 0 & 0 \\ 0 & 0 & 1 & i \\ 0 & 0 & i & 1
	\end{pmatrix} \\
	& \rho^{isg}(\sigma_5) = exp(\frac{\pi i}{8}) \begin{pmatrix}
		-1 & 0 & 0 & 0 \\ 0 & i & 0 & 0 \\ 0 & 0 & -1 & 0 \\ 0 & 0 & 0 & i
	\end{pmatrix} &
\end{align*}
To simplify the block diagonal representations presented above and later to translate them to the ZX forms straightforwardly, we extract two basic generators for constructing these matrices, called $U$-generators.
\begin{align*}
	& U_1^{isg} = exp(\frac{\pi i}{4}) \begin{pmatrix}
		1 & 0 \\ 0 & -i
	\end{pmatrix} = Z_{\frac{-\pi}{2}}, \\
	& U_2^{isg} = \frac{1}{\sqrt{2}} \begin{pmatrix}
		1 & i \\ i & 1 
	\end{pmatrix} = X_{\frac{-\pi}{2}}
\end{align*}
Given U-generators, we restate the 4-dimensional representations of the Ising braid group on 6 anyons as below ($\approx$ means up to a global phase factor),
\begin{align*}
	& \rho^{isg}(\sigma_1) = -exp(\frac{i\pi}{8})(U_1^{isg} \otimes I) \approx ( Z_{\frac{-\pi}{2}} \otimes I ), \\
	& \rho^{isg}(\sigma_2) = -exp(\frac{-i\pi}{8})(U_2^{isg} \otimes I) \approx ( X_{\frac{-\pi}{2}} \otimes I ), \\
	&\rho^{isg}(\sigma_3) \approx  (CNOT)(Z_{-\frac{\pi}{2}}\otimes I)(CNOT), \\
	& \rho^{isg}(\sigma_4) = -exp(\frac{i\pi}{8}) (I \otimes U_2^{isg}) \approx (I \otimes X_{\frac{-\pi}{2}} ) \\
	& \rho^{isg}(\sigma_5) = -exp(\frac{-i\pi}{8}) (I \otimes U_1^{isg}) \approx (I \otimes Z_{\frac{-\pi}{2}} )
\end{align*}

%\begin{figure}
%\centering
%\input{images/ising-gates2.tex}
%\caption{ZX-representation of Ising gates.}
%\end{figure}

These generators satisfy the following braid relations: 
\begin{align}
& [\rho^{isg}(\sigma_i), \rho^{isg}(\sigma_j)]=0  \hspace{0.5cm}\forall \abs{i-j}\ge 2, \label{first-row}\\
& \rho^{isg}(\sigma_i)\rho^{isg}(\sigma_j)\rho^{isg}(\sigma_i) = \rho^{isg}(\sigma_j)\rho^{isg}(\sigma_i)\rho^{isg}(\sigma_j) & \forall \abs{i-j}= 1,\label{second-row}
\end{align}
The first Equation~\ref{first-row}, states that non-overlapping braids commute. These can be shown straightforwardly using the ZX representation because either: (i) the ZX representations act on different wires or (ii) the representations act on the same wire, but the ZX generators commute thanks to the spider fusion law. For example, the commutation of $\rho^{isg}(\sigma_3)$ and $\rho^{isg}(\sigma_5)$ can be proven as follows:
\[
\begin{tikzpicture}[tikzfig]
	\begin{pgfonlayer}{nodelayer}
		\node [style=Z phase dot] (7) at (-2.75, 0) {$\frac{-\pi}{2}$};
		\node [style=X dot] (8) at (-3.75, 0) {};
		\node [style=Z dot] (9) at (-4.25, -1) {};
		\node [style=Z dot] (10) at (-4.25, 1) {};
		\node [style=none] (11) at (-5.5, 1) {};
		\node [style=none] (12) at (-0.5, 1) {};
		\node [style=none] (13) at (-5.5, -1) {};
		\node [style=none] (14) at (-0.5, -1) {};
		\node [style=Z phase dot] (15) at (-1.75, -1) {\,$-\frac{\pi}{2}$\,};
		\node [style=none] (16) at (0.25, 0) {$=$};
		\node [style=Z phase dot] (17) at (3.75, 0) {$\frac{\pi}{2}$};
		\node [style=X dot] (18) at (2.75, 0) {};
		\node [style=Z dot] (20) at (2.25, 1) {};
		\node [style=none] (21) at (1, 1) {};
		\node [style=none] (22) at (4, 1) {};
		\node [style=none] (23) at (1, -1) {};
		\node [style=none] (24) at (4, -1) {};
		\node [style=Z phase dot] (25) at (2.25, -1) {\,$\frac{-\pi}{2}$\,};
		\node [style=none] (26) at (4.75, 0) {$=$};
		\node [style=Z phase dot] (27) at (9.75, 0) {$\frac{-\pi}{2}$};
		\node [style=X dot] (28) at (8.75, 0) {};
		\node [style=Z dot] (29) at (8.25, -1) {};
		\node [style=Z dot] (30) at (8.25, 1) {};
		\node [style=none] (31) at (5.5, 1) {};
		\node [style=none] (32) at (10.5, 1) {};
		\node [style=none] (33) at (5.5, -1) {};
		\node [style=none] (34) at (10.5, -1) {};
		\node [style=Z phase dot] (35) at (6.75, -1) {\,$-\frac{\pi}{2}$\,};
	\end{pgfonlayer}
	\begin{pgfonlayer}{edgelayer}
		\draw (11.center) to (12.center);
		\draw (13.center) to (14.center);
		\draw (9) to (8);
		\draw (10) to (8);
		\draw (8) to (7);
		\draw (21.center) to (22.center);
		\draw (23.center) to (24.center);
		\draw (20) to (18);
		\draw (18) to (17);
		\draw (31.center) to (32.center);
		\draw (33.center) to (34.center);
		\draw (29) to (28);
		\draw (30) to (28);
		\draw (28) to (27);
		\draw (25) to (18);
	\end{pgfonlayer}
\end{tikzpicture}
\]

The second Equation~\ref{second-row}, is essentially the Yang-Baxter equation. These are all implied by variations of the $P$-rule, as shown in Sections~\ref{sec:braid-relations} and~\ref{sec:2q-braids}.

The Ising model is a non-universal model of quantum computation \cite{wang2010, rowell2016}. The above ZX-based relations, furthermore, concede they represent the Clifford fragment of the ZX-calculus, which indicates  non-universality. 
\subsubsection{Fibonacci Model}
Fibonacci anyons present the simplest universal model of topological quantum computation. The label set has only one non-trivial anyon called the Fibonacci anyon, $\{\tau\}$. There is only one non-trivial Fibonacci rule, namely when the Fibonacci anyon fuses with another Fibonacci anyon, $\tau \otimes \tau = 1 \oplus \tau $. 

A technique for efficiently approximating single qubit gates and $CNOT$ was put forth in \cite{bonesteel2005braid}. The universality of this model is also proved in e.g.~\cite{wang2010}.  Further proposals for quantum gate synthesis based on a Monte Carlo algorithm or reinforcement learning were suggested in~\cite{zhang2020}. Here, we take a somewhat dual approach. Rather than translating quantum computational primitives to complex sequences of braid generators, we translate individual braid generators to ZX-diagrams, which is more closely related to quantum gates and can be reasoned about using the rules of the ZX-calculus.

%\begin{figure}[h!]
%	\centering
%	\input{images/fib-qubits}
%	\caption{Fibonacci qubits or the braid group on three strands, B3B_3. }
%\end{figure} 
To encode qubits, we have to consider three Fibonacci anyons when their total charge is $\tau$. The internal charge $x$ determines two basis states. We map $\{|x=1\rangle , |x=\tau \rangle \}$ to $\{|0\rangle , |1 \rangle \}$ respectively. Solving pentagonal and hexagonal equations in this Hilbert space, $H = Span(\{|0\rangle, |1\rangle \})$, we obtain the following solutions for $F$ and $R_1^{Fib}$ matrices \cite{wang2010}, if $\phi = \frac{\sqrt{5}+1}{2}$. 
\begin{align}
	& \phi^2 = \phi + 1, 1 = \phi^{-1}+\phi^{-2}, \\
	& F = \begin{pmatrix}
		\phi^{-1} & \phi^{\frac{-1}{2}} \\ \phi^{\frac{-1}{2}} & -\phi^{-1}
	\end{pmatrix}= \begin{pmatrix}
		1 & 0 \\ 0 & i 
	\end{pmatrix} \begin{pmatrix}
		\phi^{-1} & -\phi^{\frac{-1}{2}}i \\ -\phi^{\frac{-1}{2}}i & \phi^{-1} 
	\end{pmatrix}  \begin{pmatrix}
		1 & 0 \\ 0 & i 
	\end{pmatrix}\\
	& R_1^{Fib} = exp(\frac{-4\pi i}{5}) \begin{pmatrix}
		1 & 0 \\ 0 & exp(\frac{7\pi i}{5})
	\end{pmatrix}, R = exp(\frac{7\pi i}{5}) \\
	& R_2^{Fib} = F^{Fib} R_1^{Fib}F^{Fib} = \begin{pmatrix}
		\phi^{-2}+R\phi^{-1} & \phi^\frac{-3}{2}(1-R) \\
		\phi^\frac{-3}{2}(1-R) & \phi^{-1}+R\phi^{-2}
	\end{pmatrix}
\end{align}
One can easily observe $F = Z_{\frac{\pi}{2}} X_\theta Z_{\frac{\pi}{2}}$ and 
$R_1^{Fib}$ is a $\frac{7\pi i}{5}$ green spider. Considering the general form of $X_\theta$-rotation, 
\begin{equation*}
	X_\theta = exp(\frac{\theta i}{2}) \begin{pmatrix}
		cos(\frac{\theta}{2}) & -isin(\frac{\theta}{2}) \\
		-isin(\frac{\theta}{2}) & cos(\frac{\theta}{2})
	\end{pmatrix}
\end{equation*}
we can find $\theta$ for $F^{Fib}$, which is $\theta = 2Arccos(\phi^{-1})$ if $\frac{\pi}{2}\prec \theta \prec \pi$.  Numerically, it is $\theta \approx \frac{129 \pi}{224}$.

%\begin{align}
 %&    	\input{images/fib-r.tex}\\
 %& \input{images/fib-f.tex}
%\end{align}
Based on the explicit representation of $R_1$ and $R_2$, we can see the following equalities which somehow simplify our next equations.

%\begin{equation}
%	\input{images/fib-r2.tex}	
%\end{equation}

To find the 4-dimensional representations of Fibonacci anyons, we need to create two pairs of three anyons; the total charge of these six anyons is vacuum or 1. However, the charge of a group of three anyons can be either 1 or $\tau$. If it is $\tau$, we are in the computational basis, and if it is 1, we are in the non-computational basis. As it should be clear from the graph, the computational space is 4-dimensional, and the non-computational space is one-dimensional.

Unlike Ising anyons, Fibonacci anyons have the possibility of leakage into the non-computational space. That is while swapping the middle anyons, anyons number 33 and 44, the state may change to a non-computational state. Note that this is admissible since it preserves the total charge 11 (parent node). So representations of braid group on six anyons should be actually mapped into this 5-dimensional space of $H=H_{C} \oplus H_{NC} $, or $H=Span\{ |NC \rangle, |11 \rangle, |1\tau \rangle, |1\tau \rangle, |\tau1 \rangle, |\tau \tau \rangle \}$ .

The same method of Temperly-Lieb algebra \cite{cui2019search} gives a representation of Fibonacci $B_6$. If $R = exp(\frac{7\pi i}{5})$. 
\begin{align}
	& \rho^{Fib}(\sigma_1) = exp(\frac{-4\pi i}{5}) [R \oplus (R_1^{Fib} \otimes I_2)] \label{eq:fib-1d-1}\\
	& \rho^{Fib}(\sigma_2) = exp(\frac{-4\pi i}{5}) [R \oplus (FR_1^{Fib}F \otimes I_2)] \\
	& \rho^{Fib}(\sigma_3) = exp(\frac{-4\pi i}{5}) [
	P_{14}(R \oplus R_1^{Fib} \oplus FR_1^{Fib}F)P_{14}]\\
	&  \rho^{Fib}(\sigma_4) = exp(\frac{-4\pi i}{5}) [R \oplus (I_2 \otimes FR_1^{Fib}F )] \\
	& \rho^{Fib}(\sigma_5) = exp(\frac{-4\pi i}{5}) [R \oplus (I_2 \otimes R_1^{Fib} )] \label{eq:fib-1d-2}
\end{align}
Unlike the 2-dimensional generators, the 5-dimensional ones do not directly give us clue into their ZX diagrams. However, as one can observe from their matrix form, they are block diagonal and easily re presentable by quantum circuits. In the following part, we intend to find these circuits and then transform them into the ZX diagrams.

We first need to cope with a dimension mismatch. The 6-anyon Fibonacci generators occupy a 5-dimensional space. However, the ZX-calculus, and indeed the usual circuit model, can only represent linear operators on $2^k$-dimensional spaces. Hence, we can further ``pad out'' this 5D space to make it $2^3 = 8$-dimensional, by introducing three extra ``garbage'' basis elements. These have no physical interpretation beyond allowing us to represent the generators above in a subspace of 3-qubit space.

We map the old basis states into new basis states in the following way:
\begin{align*}
	& |NC\rangle \longrightarrow |011\rangle, 
	|11\rangle \longrightarrow |100\rangle, 
	|1\tau\rangle \longrightarrow |101\rangle, \\
	& |\tau1\rangle \longrightarrow |110\rangle, 
	|\tau \tau\rangle \longrightarrow |111\rangle
\end{align*}
Note that this mapping sends basis states in the computation subspace to basis states whose first qubit is $1$, and reserves $|011\rangle$ for the non-computational state, and has extra ``garbage'' basis states $|000\rangle$, $|001\rangle$, and $|010\rangle$.

In our encoding into this larger space, we perform an additional simplification step. The na\"ive encoding of the 5D generators is simply to take the direct sum with the $3\times 3$ identity matrix $I_3$. However, since the garbage basis states will have no significance for our calculation, taking the direct sum with \textit{any} unitary matrix will do just as well. We take advantage of this for the first generator, where we instead take the direct sum with $RI_3$. This yields a much simpler unitary in the 3-qubit representation. The embedded matrices allow us to identify the building blocks or $U$-generators.

\begin{align*}
	& U_1^{Fib}= \begin{pmatrix} RI_4 & 0 \\ 0 & I_4 \end{pmatrix},
	U_2^{Fib} = \begin{pmatrix} I_6 & 0 \\ 0 & RI_2 \end{pmatrix}, \text{where}~R = \frac{7\pi i}{5} \\
	& U_3^{Fib} = \begin{pmatrix} I_2 & 0 \\ 0 & R_2^{Fib} \end{pmatrix} \otimes I_2, 
	U_4^{Fib} = \begin{pmatrix} 
		I_4 & 0 & 0 \\
		0 & R_1^{Fib} & 0 \\
		0 & 0 & I_2 
	\end{pmatrix},\\
	&U_5^{Fib} = \begin{pmatrix}
		I_6 & 0 \\ 0 & R_1^{Fib}
	\end{pmatrix}, 
	U_6 = \begin{pmatrix}
		I_4 & 0 & 0 \\ 0 & R_2^{Fib} & 0 \\ 0 & 0 & I_2
	\end{pmatrix},  \\
	& U_7^{Fib} = \begin{pmatrix}
		I_6 & 0 \\ 0 & R_2^{Fib}
	\end{pmatrix}, \\
	& P_{14} = (CCX_2)(CCX_0)(CCX_2),
	CCX_i: i-\text{target qubit}   
\end{align*}
The ZX-form of $U$ generators for Fibonacci anyons are longer chains of single and 2-qubit gates. To see them, consult PyZX Anyon package \cite{github}.
%\begin{figure}[t!]
%\centering
%\input{images/fib-u.tex}
%\caption{The first two U-generators of Fibonacci B6B_6}\label{fig:fibu}
%\end{figure}
Fibonacci anyons' braid generators satisfy the following commutation relations, which are also provable by graphical calculus. Consider, $\{1, 2, ..., 7\}$, 
\begin{align*}
	& [U_1, U_i] = 0 && \forall i \in l, & \\
	& [U_2, U_i] = 0 && \forall i \in l/\{3\}, & U_2U_3U_2 = U_3U_2U_3,  &\\
	& [U_3, U_i] = 0 && \forall i,  & \\
	& [U_4, U_i] = 0 && \forall i \in l/\{3, 6\}, & U_4U_6U_4 = U_6U_4U_6, & \\
	& [U_5, U_i] = 0 && \forall i \in l/\{3, 7\}, & U_5U_7U_5 = U_7U_5U_7,  &\\
	& [U_6, U_i] = 0 && \forall i \in l/\{3, 4\}, & \\
\end{align*}

We break down generators of $\mathcal{B}_6$ to the multiplication of $U$-generators as follows; 
\begin{align*}
	& \rho^{Fib}(\sigma_1)=U_1U_2, && \rho^{Fib}(\sigma_2)=U_1U_3, \\
 & \rho^{Fib}(\sigma_3)=P_{14}(U_1U_4U_7)P_{14},\\
	& \rho^{Fib}(\sigma_4)=U_1U_6U_7, && \rho^{Fib}(\sigma_5)=U_1U_4U_5.&
\end{align*}
\begin{figure}
\begin{tikzpicture}
	\begin{pgfonlayer}{nodelayer}
		\node [style=X phase dot] (0) at (0, 0) {$\theta$};
		\node [style=Z phase dot] (1) at (0.5, 0) {$\frac{\pi}{2}$};
		\node [style=Z phase dot] (2) at (-0.5, 0) {$\frac{\pi}{2}$};
		\node [style=none] (3) at (1, 0) {};
		\node [style=none] (4) at (-1, 0) {};
		\node[style=none] (5) at(-1.5, 0){$F=$};
	\end{pgfonlayer}
	\begin{pgfonlayer}{edgelayer}
		\draw (4.center) to (3.center);
	\end{pgfonlayer}
\end{tikzpicture}\\
\begin{tikzpicture}
	\begin{pgfonlayer}{nodelayer}
		\node [style=none] (0) at (-2.75, 0) {$R_1^{Fib}=$};
		\node [style=none] (1) at (-2, 0) {};
		\node [style=none] (2) at (-1.5, 0) {};
		\node [style=none] (3) at (-2, 0) {};
		\node [style=none] (4) at (-1, 0) {};
		\node [style=none] (5) at (-0.5, 0) {};
		\node [style=none] (7) at (1.25, 0) {$R_2^{Fib}$};
		\node [style=none] (8) at (2, 0) {};
		\node [style=none] (9) at (2.5, 0) {};
		\node [style=none] (10) at (3, 0) {};
		\node [style=none] (11) at (3.5, 0) {};
		\node [style=none] (12) at (4, 0) {};
		\node [style=none] (13) at (4.5, 0) {};
		\node [style=none] (14) at (5, 0) {};
		\node [style=Z phase dot] (15) at (-1.5, 0) {$\alpha$};
		\node [style=Z phase dot] (16) at (-1, 0) {$\pi$};
		\node [style=Z phase dot] (17) at (2.5, 0) {$\frac{\pi}{2}$};
		\node [style=Z phase dot] (18) at (3.5, 0) {$\alpha$};
		\node [style=Z phase dot] (19) at (4.5, 0) {$\frac{\pi}{2}$};
		\node [style=X phase dot] (20) at (3, 0) {$\theta$};
		\node [style=X phase dot] (21) at (4, 0) {$\theta$};
		\node [style=none] (22) at (0, 0) {};
		\node [style=none] (23) at (0, 0) {$,$};
	\end{pgfonlayer}
	\begin{pgfonlayer}{edgelayer}
		\draw (8.center) to (14.center);
		\draw (3.center) to (5.center);
	\end{pgfonlayer}
\end{tikzpicture}\\
\begin{tikzpicture}
	\begin{pgfonlayer}{nodelayer}
		\node [style=none] (0) at (-4, 0.5) {};
		\node [style=none] (1) at (-2, 0.5) {};
		\node [style=none] (2) at (-4, 0) {};
		\node [style=none] (3) at (-2, 0) {};
		\node [style=none] (4) at (-4, -0.5) {};
		\node [style=none] (5) at (-2, -0.5) {};
		\node [style=none] (6) at (-4.75, 0) {$U_1=$};
		\node [style=none] (7) at (-1.25, 0) {};
		\node [style=none] (8) at (0.25, 0.5) {};
		\node [style=none] (9) at (2.5, 0.5) {};
		\node [style=none] (10) at (0.25, 0) {};
		\node [style=none] (11) at (2.5, 0) {};
		\node [style=none] (12) at (0.25, -0.5) {};
		\node [style=none] (13) at (2.5, -0.5) {};
		\node [style=none] (14) at (-0.5, 0) {$U_2=$};
		\node [style=Z phase dot] (15) at (-3, 0.5) {$\frac{7\pi}{5}$};
		\node [style=Z phase dot] (17) at (2, 0.25) {\,$\frac{-7\pi}{10}$\,};
		\node [style=Z phase dot] (19) at (1, 0.5) {\,$\frac{7\pi}{5}$\,};
		\node [style=Z phase dot] (20) at (1, 0) {\,$\frac{7\pi}{10}$\,};
		\node [style=X phase dot] (23) at (-3.5, 0.5) {$\pi$};
		\node [style=X phase dot] (24) at (-2.5, 0.5) {$\pi$};
		\node [style=X dot] (25) at (1.5, 0.25) {};
	\end{pgfonlayer}
	\begin{pgfonlayer}{edgelayer}
		\draw (0.center) to (1.center);
		\draw (2.center) to (3.center);
		\draw (4.center) to (5.center);
		\draw (8.center) to (9.center);
		\draw (10.center) to (11.center);
		\draw (12.center) to (13.center);
		\draw (20) to (25);
		\draw (19) to (25);
		\draw (25) to (17);
	\end{pgfonlayer}
\end{tikzpicture}
\caption{The ZX-representation of Fibonacci anyons.}
\end{figure}
It is, therefore, enough to find quantum circuits of the $U$-generators and transform them into the $ZX$-diagrams. One can observe how the expression of $U_6$ and $U_7$ are related to each other by taking $U_7  (CCR_2^{Fib})$ and $U_6 = (I \otimes X \otimes I)U_7(I \otimes X \otimes I)$. We represent $U_7 = (CCF)U_5(CCF)$, that means the ZX-representation of $U_7$ reduces to representing $CCF$ or $(CCZ_{\frac{-\pi}{2}} )(CCX_\theta) (CCZ_{\frac{-\pi}{2}}) $. 

So far, we have explored the full description of braid generators up to six strands. Given these ZX-diagrams, one can create any braid of up to six strands and exploit rewrite rules to simplify them. Using anyon library of PyZX, this process is automatic \cite{github}. However, due to the current limitation of PyZX, after simplification, PyZX does not extract a circuit diagram from a simplified graph since the Fibonacci angle is an arbitrary non-Clifford angle. 

Note that PyZX does an acceptable job when dealing with single qubit gates and braids on three strands. However, we make this simplification more precise by introducing specific $P$-rules for Fibonacci and Ising. The specific $P$-rules not only help to simplify braids, but also it is helpful for topological quantum compiling of single-qubit gates; one can use an intermediate step, switch to the ZX-calculus, optimise them exactly, and come back to the circuit representation. 

\begin{theorem}(\textbf{Fibonacci Single Qubit P-rule})\label{them:fib-p-rule}
For $\theta = 2Arccos(\phi^{-1})$, we have:
\begin{center}
\begin{tikzpicture}[tikzfig]
	\begin{pgfonlayer}{nodelayer}
		\node [style=X phase dot] (0) at (-4, 0) {$\theta$};
		\node [style=Z phase dot] (1) at (-3, 0) {\,$\frac{2\pi}{5}$\,};
		\node [style=X phase dot] (2) at (-2, 0) {$\theta$};
		\node [style=none] (3) at (-5, 0) {};
		\node [style=none] (4) at (-1, 0) {};
		\node [style=none] (5) at (0, 0) {$=$};
		\node [style=X phase dot] (6) at (3, 0) {$\theta$};
		\node [style=Z phase dot] (7) at (2, 0) {\,$\frac{3\pi}{5}$\,};
		\node [style=none] (9) at (1, 0) {};
		\node [style=none] (10) at (5, 0) {};
		\node [style=Z phase dot] (11) at (4, 0) {\,$\frac{3\pi}{5}$\,};
	\end{pgfonlayer}
	\begin{pgfonlayer}{edgelayer}
		\draw (3.center) to (4.center);
		\draw (9.center) to (10.center);
	\end{pgfonlayer}
\end{tikzpicture}
\end{center}
\end{theorem}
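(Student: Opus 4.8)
The plan is to recognise the claimed identity as a single instance of the (colour-swapped) $P$-rule and then verify that the $P$-rule's angle formulas return exactly $(\alpha',\beta',\gamma')=(3\pi/5,\,\theta,\,3\pi/5)$ when fed the left-hand data $(\alpha,\beta,\gamma)=(\theta,\,2\pi/5,\,\theta)$. Since the rules of Fig.~\ref{fig:zx-rules} (and with them the $P$-rule) hold with the two colours interchanged, the left-hand $X$-$Z$-$X$ chain transforms into a $Z$-$X$-$Z$ chain governed by exactly the formulas in the $P$-rule definition, so the whole argument reduces to a trigonometric evaluation using golden-ratio identities. Equivalently, one may simply multiply the two $2\times 2$ matrices $X_\theta Z_{2\pi/5} X_\theta$ and $Z_{3\pi/5} X_\theta Z_{3\pi/5}$ and check that they agree up to a global phase; this is the same computation in disguise.

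First I would assemble the needed golden-ratio data. From the defining relation $\cos(\theta/2)=\phi^{-1}$ together with $1-\phi^{-2}=\phi^{-1}$ (a restatement of the stated identity $1=\phi^{-1}+\phi^{-2}$) one obtains $\sin(\theta/2)=\phi^{-1/2}$, hence $\tan(\theta/2)=\phi^{1/2}$, and by the double-angle formulas $\cos\theta=2\phi^{-2}-1=3-2\phi$ and $\sin\theta=2\phi^{-3/2}$. The second ingredient is the pentagon identity $2\cos(\pi/5)=\phi$ (equivalently $\cos(2\pi/5)=\tfrac12\phi^{-1}$), which is the crucial bridge between the irrational angle $\theta$ and the rational multiples of $\pi/5$ appearing on the right.

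Next I would substitute $\alpha=\gamma=\theta$ and $\beta=2\pi/5$ into the $P$-rule expressions for $z$ and $z'$. Because $\alpha-\gamma=0$, the imaginary part of $z'$ drops out and $z'=\cos(\pi/5)\sin\theta=(\phi/2)(2\phi^{-3/2})=\phi^{-1/2}$, which is real and positive, so $\arg(z')=0$ and $|z'|=\phi^{-1/2}$. For $z=\cos(\pi/5)\cos\theta+i\sin(\pi/5)$ I would compute $\operatorname{Re}(z)=(\phi/2)(3-2\phi)=-\tfrac12\phi^{-2}$ and, using $|z|^2=\tfrac14\phi^{-4}+\sin^2(\pi/5)$ with $\sin^2(\pi/5)=1-\phi^2/4$, show $|z|^2=2-\phi=\phi^{-2}$, so $|z|=\phi^{-1}$. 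The key step is then identifying $\arg(z)$: writing $\cos(3\pi/5)=-\tfrac12\phi^{-1}$ and $\sin(3\pi/5)=\sin(2\pi/5)=2\sin(\pi/5)\cos(\pi/5)$, the identity $2\cos(\pi/5)=\phi$ gives $\phi^{-1}\cos(3\pi/5)=-\tfrac12\phi^{-2}=\operatorname{Re}(z)$ and $\phi^{-1}\sin(3\pi/5)=\sin(\pi/5)=\operatorname{Im}(z)$, so that $z=\phi^{-1}e^{\,i\,3\pi/5}$ and $\arg(z)=3\pi/5$.

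Finally I would read off the output angles: $\alpha'=\arg(z)+\arg(z')=3\pi/5$ and $\gamma'=\arg(z)-\arg(z')=3\pi/5$, while $\beta'=2\arg(|z|/|z'|+i)=2\arg(\phi^{-1/2}+i)=2\arctan(\phi^{1/2})=2\cdot(\theta/2)=\theta$, the last equality being exactly $\tan(\theta/2)=\phi^{1/2}$ from the first step. These coincide with the right-hand angles, completing the proof. I expect the main obstacle to be the argument computation for $z$: it is the one place where the golden ratio must be converted into pentagon angles, and getting both the real and imaginary parts to land on $\phi^{-1}\cos(3\pi/5)$ and $\phi^{-1}\sin(3\pi/5)$ hinges entirely on the identity $2\cos(\pi/5)=\phi$, so I would establish that identity carefully (for instance from the minimal polynomial $4x^2-2x-1$ of $\cos(\pi/5)$) before relying on it.
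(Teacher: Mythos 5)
Your proposal is correct and follows essentially the same route as the paper: substitute $(\alpha,\beta,\gamma)=(\theta,\,2\pi/5,\,\theta)$ into the (colour-swapped) $P$-rule formulas and verify, using golden-ratio identities, that the output angles are $(3\pi/5,\,\theta,\,3\pi/5)$. The only real difference is cosmetic --- you ground the pentagon identity $2\cos(\pi/5)=\phi$ in the minimal polynomial of $\cos(\pi/5)$, whereas the paper derives the equivalent fact $\cos(2\pi/5)=\tfrac{1}{2}\phi^{-1}$ from a lemma tied to the hexagonal equation --- and your bookkeeping (the correct sign $\cos\theta=3-2\phi<0$, the explicit computation $|z|=\phi^{-1}$, and the clean identification $z=\phi^{-1}e^{i\,3\pi/5}$) is in fact more careful than the paper's own write-up.
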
 
The proof uses the main $P$-rule, the fact that $\phi$ is the golden ratio and satisfies $\phi^2 - \phi - 1= 0$, also $R=\frac{7\pi}{5}$, and that $R$ and $\phi$ are related to each other through the hexagonal equation. As Coecke and Wang mentioned in \cite{coecke2018zx}, if the first and third angles on the left-hand side of the equation are equal, then the first and third angles on the right-hand side must also be equal.  To prove the above rule, we need a lemma. 
\begin{lemma}
If $\alpha= \frac{2\pi}{5}$ and $\phi$ is the golden ratio, then $cos(\alpha)=\frac{\phi^{-1}}{2}$. 
\end{lemma}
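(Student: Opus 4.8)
The plan is to show that $2\cos(2\pi/5)$ and $\phi^{-1}$ are both equal to the unique positive root of the quadratic $t^2 + t - 1 = 0$, and hence equal to each other. This converts a trigonometric identity into an elementary comparison of two algebraic numbers satisfying the same minimal polynomial.

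First I would record the algebraic fact about the golden ratio. From the defining relation $\phi^2 = \phi + 1$ already stated in the excerpt (equivalently $1 = \phi^{-1} + \phi^{-2}$), dividing through by $\phi^2$ shows that $\phi^{-1}$ satisfies $x^2 + x - 1 = 0$. Since the roots of this quadratic are $\frac{-1 \pm \sqrt{5}}{2}$ and $\phi^{-1} = \frac{\sqrt{5}-1}{2} > 0$, we conclude that $\phi^{-1}$ is precisely its positive root.

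Next I would compute the minimal polynomial of $2\cos(2\pi/5)$ via the fifth roots of unity. Let $\zeta = e^{2\pi i/5}$; as a primitive fifth root of unity it satisfies $1 + \zeta + \zeta^2 + \zeta^3 + \zeta^4 = 0$. Grouping complex-conjugate pairs gives $\zeta + \zeta^4 = 2\cos(2\pi/5)$ and $\zeta^2 + \zeta^3 = 2\cos(4\pi/5)$. Writing $u = 2\cos(2\pi/5)$ and $v = 2\cos(4\pi/5)$, the vanishing sum yields $u + v = -1$, while expanding the product and reducing exponents modulo $\zeta^5 = 1$ gives $uv = (\zeta+\zeta^4)(\zeta^2+\zeta^3) = \zeta + \zeta^2 + \zeta^3 + \zeta^4 = -1$. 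Hence $u$ and $v$ are exactly the two roots of $t^2 + t - 1 = 0$.

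Finally, since $\cos(2\pi/5) > 0$, the quantity $u = 2\cos(2\pi/5)$ is the positive root of $t^2+t-1$, which by the first step coincides with $\phi^{-1}$. Dividing by two gives $\cos(2\pi/5) = \phi^{-1}/2$, as claimed. The only delicate points are the root-of-unity bookkeeping in the computation of $uv$ and the selection of the correct (positive) root in each quadratic; I expect these to be the main, though minor, obstacles, with the remainder being a direct quadratic computation.
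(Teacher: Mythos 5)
Your proof is correct, but it takes a genuinely different route from the paper's. You identify $2\cos(2\pi/5)$ and $\phi^{-1}$ as the unique positive root of the common quadratic $t^2+t-1=0$, using the vanishing sum of the fifth roots of unity to get $u+v=-1$ and $uv=-1$ for the two conjugate-pair sums; this is an elementary, self-contained cyclotomic argument, and the root-of-unity bookkeeping and sign selection you flag are indeed the only delicate points, both handled correctly. The paper instead derives the identity from the hexagonal consistency equations of the Fibonacci model: substituting a diagonal braiding matrix $R_m=\mathrm{diag}(R_1,R_\tau)$ into the hexagon equations yields, among other relations, $R_\tau^2+\phi^{-1}R_\tau+1=0$; since $R_\tau$ is a unimodular phase $e^{i\xi}$, dividing by $R_\tau$ gives $2\cos(\xi)=-\phi^{-1}$, and the Fibonacci solution has $\xi=3\pi/5$ (equivalently $7\pi/5$), which is the same statement as $\cos(2\pi/5)=\phi^{-1}/2$. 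What your approach buys is rigour and independence: it needs nothing beyond $\phi^2=\phi+1$ and the cyclotomic relation for $\zeta=e^{2\pi i/5}$, whereas the paper's sketch leans on the unitarity of $R_\tau$ and on the specific solution of the hexagon equations without spelling these out. What the paper's approach buys is context: it explains \emph{why} the golden-ratio angle appears in the Fibonacci $R$-matrix in the first place, tying the lemma directly to the categorical data used in the surrounding Fibonacci P-rule theorem.
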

\begin{proof}
We define a general $R_m = \begin{pmatrix} R_1 & 0 \\ 0 & R_{\tau}\end{pmatrix}$, and if we substitute it in the hexagonal equation, we have: 
\begin{align*}
& \phi^{-1} + R_\tau = R_1^2, && \phi^{-1}-\phi^{-1}R_\tau = R_1R_\tau, \\ & R^2_\tau+\phi^{-1}R_\tau+1 = 0.&
\end{align*} 
	Considering this set of equations and $cos(\xi)= \frac{-\phi^{-1}}{2}$, we obtain 
$\beta = \frac{\pi}{2} - \frac{\xi}{2}$ and $\frac{R_\tau}{R_1} = \xi - \beta = \frac{3\xi}{2}-\frac{\pi}{2}$. 
\end{proof}
\begin{proof}[Proof of Theorem \ref{them:fib-p-rule}]
Let us substitute initial angles in $(z, z')$, 
	\begin{align*}
		& z = cos(\frac{\alpha}{2}) cos(\theta)+isin(\frac{\alpha}{2}) \\
		& z'= cos(\frac{\alpha}{2})sin(\theta)
	\end{align*}
	Using previous Lemma, we have:
	\begin{align*}
		& cos(\frac{\alpha}{2}) = \frac{\sqrt{\phi^{-1}+2}}{2} = \frac{\phi}{2}, \\
		& sin(\frac{\alpha}{2}) = \frac{\sqrt{2-\phi^{-1}}}{2} = \frac{\phi}{2},\\
		& cos(\frac{\theta}{2}) = \phi^{-1}, \\
  &sin(\frac{\theta}{2})=\phi^{\frac{-1}{2}}, 
		sin(\theta)=2\phi^{\frac{-3}{2}}, \\&cos(\theta)= \sqrt{1-4\phi^{-3}}
	\end{align*}
	We can check $\sin(\frac{\gamma}{2})= \phi^{\frac{-1}{2}}$, which incidently proves$\gamma = \theta \pm \pi$. For side angles. we need to only compute argument of $z$ as clearly $arg(z')=0$. We compute $cos(arg(z))$ instead, 
	\begin{align*}
		& cos(arg(z)) = \frac{\frac{\phi}{2} \sqrt{1-4\phi^{-3}}}{\phi^{-1}} = -cos(\alpha)
	\end{align*}
\end{proof}
\subsection{Single-qubit Braid Relations}\label{sec:braid-relations}
In this part, we can check the non-trivial braid relations on three anyons, namely Yang-Baxter equation. We show that explicitly by using $P$-rule for Ising anyons and Fibonacci. We see that the Ising Yang-Baxter is exactly an instance of the $P$-rule.  In fact, it can be obtained directly from the Hadamard rule, equation~\eqref{eq:had-rule}, by taking the adjoint of both sides.

\begin{itemize}
	\item For \textbf{Ising anyons}, we have 
\begin{equation}
\centering
\begin{tikzpicture}
	\begin{pgfonlayer}{nodelayer}
		\node [style=none] (50) at (-1.25, 0) {$R_1R_2R_1=$};
		\node [style=none] (51) at (-0.25, 0) {};
		\node [style=none] (52) at (0.25, 0) {};
		\node [style=none] (53) at (0.75, 0) {};
		\node [style=none] (54) at (1.25, 0) {};
		\node [style=none] (55) at (1.75, 0) {};
		\node [style=Z phase dot] (56) at (0.25, 0) {$\frac{-\pi}{2}$};
		\node [style=Z phase dot] (57) at (1.25, 0) {$\frac{-\pi}{2}$};
		\node [style=X phase dot] (58) at (0.75, 0) {$\frac{-\pi}{2}$};
	\end{pgfonlayer}
	\begin{pgfonlayer}{edgelayer}
		\draw (51.center) to (55.center);
	\end{pgfonlayer}
\end{tikzpicture}
\end{equation}
The other side, $R_1R_2R_1$ results in an equivalent chain of angles,
\begin{equation}
	\centering
	\begin{tikzpicture}
	\begin{pgfonlayer}{nodelayer}
		\node [style=none] (0) at (-1, 0) {$R_2R_1R_2=$};
		\node [style=none] (1) at (0, 0) {};
		\node [style=none] (2) at (0.5, 0) {};
		\node [style=none] (3) at (1, 0) {};
		\node [style=none] (4) at (1.5, 0) {};
		\node [style=none] (5) at (2, 0) {};
		\node [style=X phase dot] (6) at (0.5, 0) {$\frac{-\pi}{2}$};
		\node [style=X phase dot] (7) at (1.5, 0) {$\frac{-\pi}{2}$};
		\node [style=Z phase dot] (8) at (1, 0) {$\frac{-\pi}{2}$};
	\end{pgfonlayer}
	\begin{pgfonlayer}{edgelayer}
		\draw (1.center) to (5.center);
	\end{pgfonlayer}
\end{tikzpicture}
\end{equation}
This shows Yang-Baxter equation is an example of the P-Rule for Ising anyons.

\item For \textbf{Fibonacci anyons}, from combination of the ZX-equivalence of $R_1R_1$ and $R_2R_2$, we have
\begin{equation}
\begin{tikzpicture}
	\begin{pgfonlayer}{nodelayer}
		\node [style=none] (0) at (-1.3, 0.5) {$R_2R_1R_2=$};
		\node [style=none] (1) at (-2.5, 0) {};
		\node [style=none] (2) at (-2.25, 0) {};
		\node [style=none] (3) at (-1.75, 0) {};
		\node [style=none] (4) at (-1.25, 0) {};
		\node [style=none] (5) at (-0.75, 0) {};
		\node [style=none] (6) at (-0.25, 0) {};
		\node [style=none] (7) at (0.25, 0) {};
		\node [style=none] (8) at (0.75, 0) {};
		\node [style=none] (9) at (1.25, 0) {};
		\node [style=none] (10) at (1.75, 0) {};
		\node [style=none] (11) at (2.25, 0) {};
		\node [style=none] (12) at (2.75, 0) {};
		\node [style=none] (13) at (3.25, 0) {};
		\node [style=none] (14) at (3.75, 0) {};
		\node [style=none] (15) at (4.25, 0) {};
		\node [style=none] (16) at (4.75, 0) {};
		\node [style=none] (17) at (5.1, 0) {};
		\node [style=Z phase dot] (18) at (-2.25, 0) {$\frac{\pi}{2}$};
		\node [style=Z phase dot] (19) at (-1.25, 0) {$\frac{\pi}{2}$};
		\node [style=Z phase dot] (20) at (-0.75, 0) {$\frac{7\pi}{5}$};
		\node [style=Z phase dot] (21) at (-0.25, 0) {$\frac{-\pi}{2}$};
		\node [style=Z phase dot] (22) at (0.75, 0) {$\frac{-\pi}{2}$};
		\node [style=Z phase dot] (23) at (1.25, 0) {$\frac{7\pi}{5}$};
		\node [style=Z phase dot] (24) at (1.75, 0) {$\frac{\pi}{2}$};
		\node [style=Z phase dot] (25) at (2.75, 0) {$\frac{\pi}{2}$};
		\node [style=Z phase dot] (26) at (3.25, 0) {$\frac{7\pi}{5}$};
		\node [style=Z phase dot] (27) at (3.75, 0) {$\frac{-\pi}{2}$};
		\node [style=Z phase dot] (28) at (4.75, 0) {$\frac{-\pi}{2}$};
		\node [style=X phase dot] (29) at (-1.75, 0) {$\theta$};
		\node [style=X phase dot] (30) at (0.25, 0) {$-\theta$};
		\node [style=X phase dot] (31) at (2.25, 0) {$\theta$};
		\node [style=X phase dot] (32) at (4.25, 0) {$-\theta$};
	\end{pgfonlayer}
	\begin{pgfonlayer}{edgelayer}
		\draw (1.center) to (17.center);
	\end{pgfonlayer}
\end{tikzpicture}
\end{equation}
We fuse adjacent phases, 
\begin{equation}
	\begin{tikzpicture}
	\begin{pgfonlayer}{nodelayer}
		\node [style=none] (0) at (-2, 0) {$R_2R_1R_2=$};
		\node [style=none] (1) at (-1, 0) {};
		\node [style=none] (2) at (-0.5, 0) {};
		\node [style=none] (3) at (0, 0) {};
		\node [style=none] (4) at (0.5, 0) {};
		\node [style=none] (5) at (1, 0) {};
		\node [style=none] (6) at (1.5, 0) {};
		\node [style=none] (7) at (2, 0) {};
		\node [style=none] (8) at (2.5, 0) {};
		\node [style=none] (9) at (3, 0) {};
		\node [style=none] (10) at (3.5, 0) {};
		\node [style=Z phase dot] (11) at (-0.5, 0) {$\frac{\pi}{2}$};
		\node [style=Z phase dot] (12) at (0.5, 0) {$R$};
		\node [style=Z phase dot] (13) at (1.5, 0) {$R$};
		\node [style=Z phase dot] (14) at (2.5, 0) {$R$};
		\node [style=Z phase dot] (15) at (3.5, 0) {$\frac{-\pi}{2}$};
		\node [style=none] (16) at (4, 0) {};
		\node [style=X phase dot] (17) at (3, 0) {$-\theta$};
		\node [style=X phase dot] (18) at (2, 0) {$\theta$};
		\node [style=X phase dot] (19) at (1, 0) {$-\theta$};
		\node [style=X phase dot] (20) at (0, 0) {$\theta$};
		\node [style=none] (21) at (-0.25, 0.25) {};
		\node [style=none] (22) at (1.25, -0.25) {};
		\node [style=none] (23) at (-0.25, -0.25) {};
		\node [style=none] (24) at (1.25, 0.25) {};
	\end{pgfonlayer}
	\begin{pgfonlayer}{edgelayer}
		\draw (1.center) to (10.center);
		\draw (15) to (16.center);
		\draw [style=hadamard edge] (21.center) to (24.center);
		\draw [style=hadamard edge] (24.center) to (22.center);
		\draw [style=hadamard edge] (22.center) to (23.center);
		\draw [style=hadamard edge] (23.center) to (21.center);
	\end{pgfonlayer}
\end{tikzpicture}
\end{equation}
Applying the Fibonacci P-rule to the boxed angles, we have the following equation, where $R=\alpha+\pi$,
\begin{equation}
	\begin{tikzpicture}[tikzfig]
	\begin{pgfonlayer}{nodelayer}
		\node [style=none] (113) at (-3.75, 0) {$R_2R_1R_2 =  $};
		\node [style=none] (114) at (-1.75, 0) {};
		\node [style=none] (115) at (-1, 0) {};
		\node [style=Z phase dot] (124) at (-1, 0) {$\frac{\pi}{2}$};
		\node [style=Z phase dot] (125) at (1.5, 0) {$\alpha$};
		\node [style=Z phase dot] (126) at (3.5, 0) {$\alpha$};
		\node [style=Z phase dot] (127) at (5.5, 0) {$\alpha$};
		\node [style=Z phase dot] (128) at (7.5, 0) {$\frac{\pi}{2}$};
		\node [style=none] (129) at (9, 0) {};
		\node [style=X phase dot] (130) at (0.75, 0) {$\theta$};
		\node [style=X phase dot] (131) at (2.25, 0) {$\theta$};
		\node [style=X phase dot] (132) at (4.5, 0) {$\theta$};
		\node [style=X phase dot] (133) at (6.25, 0) {$\theta$};
		\node [style=none] (134) at (3, 0.5) {};
		\node [style=none] (135) at (3, -0.5) {};
		\node [style=none] (136) at (0, -0.5) {};
		\node [style=none] (137) at (0, 0.5) {};
		\node [style=none] (138) at (4, 0.5) {};
		\node [style=none] (139) at (7, 0.5) {};
		\node [style=none] (140) at (7, -0.5) {};
		\node [style=none] (141) at (4, -0.5) {};
	\end{pgfonlayer}
	\begin{pgfonlayer}{edgelayer}
		\draw [style=box edge] (137.center) to (134.center);
		\draw [style=box edge] (134.center) to (135.center);
		\draw [style=box edge] (135.center) to (136.center);
		\draw [style=box edge] (136.center) to (137.center);
		\draw [style=box edge] (138.center) to (139.center);
		\draw [style=box edge] (139.center) to (140.center);
		\draw [style=box edge] (139.center) to (140.center);
		\draw [style=box edge] (138.center) to (141.center);
		\draw [style=box edge] (141.center) to (140.center);
		\draw [style=box edge] (140.center) to (139.center);
		\draw [style=box edge] (139.center) to (140.center);
		\draw [style=box edge] (139.center) to (140.center);
		\draw [style=box edge] (139.center) to (140.center);
		\draw (114.center) to (129.center);
	\end{pgfonlayer}
\end{tikzpicture}
\end{equation}
We again apply the $P$-rule to obtain, 
\begin{equation}
\begin{tikzpicture}[tikzfig]
	\begin{pgfonlayer}{nodelayer}
		\node [style=none] (0) at (-4, 1) {$R_2R_1R_2=$};
		\node [style=none] (1) at (-4.5, 0) {};
		\node [style=none] (11) at (10, 0) {};
		\node [style=Z phase dot] (12) at (-3, 0) {$\frac{\pi}{2}$};
		\node [style=Z phase dot] (13) at (-1.25, 0) {$\pi-\alpha$};
		\node [style=Z phase dot] (14) at (1.25, 0) {$\pi-\alpha$};
		\node [style=Z phase dot] (15) at (2.5, 0) {$\alpha$};
		\node [style=Z phase dot] (16) at (3.75, 0) {$\pi-\alpha$};
		\node [style=Z phase dot] (17) at (6.25, 0) {$\pi-\alpha$};
		\node [style=Z phase dot] (18) at (8, 0) {$\frac{\pi}{2}$};
		\node [style=X phase dot] (19) at (5, 0) {$\theta$};
		\node [style=X phase dot] (20) at (0, 0) {$\theta$};
	\end{pgfonlayer}
	\begin{pgfonlayer}{edgelayer}
		\draw (1.center) to (11.center);
	\end{pgfonlayer}
\end{tikzpicture}
\end{equation}
We fuse the middle angles, 
\begin{equation}
	\begin{tikzpicture}[tikzfig]
	\begin{pgfonlayer}{nodelayer}
		\node [style=none] (53) at (-3, 0) {$R_2R_1R_2=$};
		\node [style=none] (54) at (-1.25, 0) {};
		\node [style=none] (55) at (-0.75, 0) {};
		\node [style=none] (56) at (-0.25, 0) {};
		\node [style=none] (57) at (0.25, 0) {};
		\node [style=none] (58) at (3.75, 0) {};
		\node [style=Z phase dot] (91) at (0, 0) {$\frac{\pi}{2}$};
		\node [style=Z phase dot] (92) at (2.5, 0) {$\frac{\pi}{2}$};
		\node [style=X phase dot] (107) at (1.25, 0) {$\theta$};
	\end{pgfonlayer}
	\begin{pgfonlayer}{edgelayer}
		\draw (54.center) to (58.center);
	\end{pgfonlayer}
\end{tikzpicture}
\end{equation}
One can explicitly write the left hand side, $R_1R_2R_1$, and obtain the same equality. 
\end{itemize}
\begin{rem}
Observe that the above results are completely exact. We did not use $\phi$ explicitly. However, anyon library of $PyZX$ works explicitly with angles and gives an approximate for braids on three strands or a composition of Fibonacci single qubit gates. 
\end{rem}
As mentioned before, having a $P$-rule for Fibonacci anyons, one can create a long chain of braids on three strands, and by consecutive application of $P$-rule for Fibonacci, plus other $ZX$-calculus rules, we are able to find a simple equivalent braid or circuit. The following braid is built on the Braid word 
$$B = [R_1,R_1,R_2,R_2,R_2,R_2,R_1,R_1]$$ 
The braid is drawn in the Figure below.
\begin{center}
\begin{tikzpicture}
\pic[
rotate=90,
braid/.cd,
every strand/.style={ultra thick},
strand 1/.style={blue},
strand 2/.style={green},
strand 3/.style={purple},
] {braid={s_1 S_1 S_2 S_2 s_2 s_2 s_1 s_1}};
\end{tikzpicture}
\end{center}	
The same braid in the $ZX$-representation is as follows. In general, to find out
the outcome matrix for this braid, one needs to multiply braid matrices. But
we consider coloured lines as Fibonacci anyons, we are able to simplify the braid
graphically exactly as Figure~\ref{fig:fib-braid-three}. We use a combination of fusion and $\pi$-rules of the ZX, meaning, we fuse spiders with similar colours and if a spider $\pi$ reaches another spider, it changes the sign of the spider and passes. 
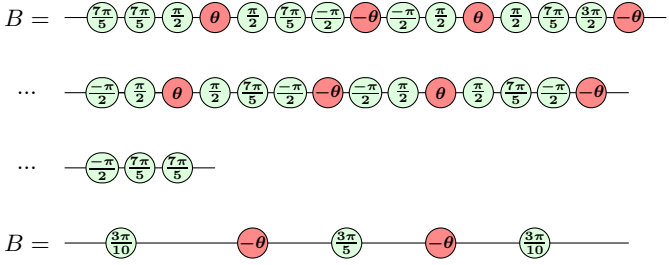
\begin{figure}[!h]
\begin{tikzpicture}
	\begin{pgfonlayer}{nodelayer}
		\node [style=none] (0) at (-3, 2) {$B=$};
		\node [style=Z phase dot] (1) at (-2, 2) {$\frac{7\pi}{5}$};
		\node [style=Z phase dot] (2) at (-1.5, 2) {$\frac{7\pi}{5}$};
		\node [style=Z phase dot] (3) at (-1, 2) {$\frac{\pi}{2}$};
		\node [style=X phase dot] (4) at (-0.5, 2) {$\theta$};
		\node [style=Z phase dot] (5) at (0, 2) {$\frac{\pi}{2}$};
		\node [style=Z phase dot] (6) at (0.5, 2) {$\frac{7\pi}{5}$};
		\node [style=Z phase dot] (7) at (1, 2) {$\frac{-\pi}{2}$};
		\node [style=X phase dot] (8) at (1.5, 2) {$-\theta$};
		\node [style=Z phase dot] (9) at (2, 2) {$\frac{-\pi}{2}$};
		\node [style=Z phase dot] (10) at (2.5, 2) {$\frac{\pi}{2}$};
		\node [style=X phase dot] (11) at (3, 2) {$\theta$};
		\node [style=Z phase dot] (12) at (3.5, 2) {$\frac{\pi}{2}$};
		\node [style=Z phase dot] (13) at (4, 2) {$\frac{7\pi}{5}$};
		\node [style=Z phase dot] (14) at (4.5, 2) {$\frac{3\pi}{2}$};
		\node [style=X phase dot] (15) at (5, 2) {$-\theta$};
		\node [style=none] (16) at (5.5, 2) {};
		\node [style=none] (17) at (6, 2) {$...$};
		\node [style=none] (18) at (-2.5, 2) {};
		\node [style=none] (19) at (-2.5, 1) {};
		\node [style=Z phase dot] (20) at (-2, 1) {$\frac{-\pi}{2}$};
		\node [style=Z phase dot] (21) at (-1.5, 1) {$\frac{\pi}{2}$};
		\node [style=X phase dot] (22) at (-1, 1) {$\theta$};
		\node [style=Z phase dot] (23) at (-0.5, 1) {$\frac{\pi}{2}$};
		\node [style=Z phase dot] (24) at (0, 1) {$\frac{7\pi}{5}$};
		\node [style=Z phase dot] (25) at (0.5, 1) {$\frac{-\pi}{2}$};
		\node [style=X phase dot] (26) at (1, 1) {$-\theta$};
		\node [style=Z phase dot] (27) at (1.5, 1) {$\frac{-\pi}{2}$};
		\node [style=Z phase dot] (28) at (2, 1) {$\frac{\pi}{2}$};
		\node [style=X phase dot] (29) at (2.5, 1) {$\theta$};
		\node [style=Z phase dot] (30) at (3, 1) {$\frac{\pi}{2}$};
		\node [style=Z phase dot] (31) at (3.5, 1) {$\frac{7\pi}{5}$};
           \node [style=none] (47) at (-3, 0) {$...$};
           \node [style=none] (48) at (-2.5, 0) {};
           \node [style=none] (49) at (-0.5, 0) {};
		\node [style=Z phase dot] (32) at (4, 1) {$\frac{-\pi}{2}$};
		\node [style=X phase dot] (33) at (4.5, 1) {$-\theta$};
		\node [style=Z phase dot] (34) at (-2, 0) {$\frac{-\pi}{2}$};
		\node [style=Z phase dot] (35) at (-1.5, 0) {$\frac{7\pi}{5}$};
		\node [style=Z phase dot] (36) at (-1, 0) {$\frac{7\pi}{5}$};
		\node [style=none] (37) at (5, 1) {};
		\node [style=none] (38) at (-3, -1) {$B=$};
		\node [style=none] (39) at (-2.5, -1) {};
		\node [style=Z phase dot] (40) at (-1.75, -1) {$\frac{3\pi}{10}$};
		\node [style=X phase dot] (41) at (0, -1) {$-\theta$};
		\node [style=Z phase dot] (42) at (1.25, -1) {$\frac{3\pi}{5}$};
		\node [style=X phase dot] (43) at (2.5, -1) {$-\theta$};
		\node [style=Z phase dot] (44) at (3.75, -1) {$\frac{3\pi}{10}$};
		\node [style=none] (45) at (5, -1) {};
		\node [style=none] (46) at (-3, 1) {$...$};
	\end{pgfonlayer}
	\begin{pgfonlayer}{edgelayer}
		\draw (44) to (40);
		\draw (39.center) to (40);
		\draw (44) to (45.center);
		\draw (20) to (37.center);
		\draw (19.center) to (22);
		\draw (18.center) to (16.center);
            \draw (48.center) to (34.center);
            \draw (34.center) to (35.center);
            \draw (35.center) to (36.center);
            \draw (36.center) to (49.center);
	\end{pgfonlayer}
\end{tikzpicture}
\caption{Braid on three strands with Fibonacci anyons.}\label{fig:fib-braid-three}
\end{figure}
\subsection{Two-qubit Braid Relations}\label{sec:2q-braids}
Finally, we note that the Yang-Baxter equations for the Ising representation of $B_6$ can also be shown straightforwardly. They are either instances of the $P$-rule applied on just one of the two qubit-wires, or a combination of the $P$-rule and the so-called ``phase gadget'' rules for two qubits. Namely, we have the following equations for any angle $\alpha$:
\begin{figure}[!h]
\begin{tikzpicture}[tikzfig]
	\begin{pgfonlayer}{nodelayer}
		\node [style=none] (8) at (-1.25, 1) {};
		\node [style=none] (9) at (1, 1) {};
		\node [style=none] (10) at (-1.25, -1) {};
		\node [style=none] (11) at (1, -1) {};
		\node [style=Z phase dot] (17) at (1.25, 0) {$\alpha$};
		\node [style=X dot] (25) at (0, 0) {};
		\node [style=Z dot] (26) at (-0.5, -1) {};
		\node [style=Z dot] (27) at (-0.5, 1) {};
		\node [style=none] (28) at (2.25, 0) {$=$};
		\node [style=none] (29) at (3, 1) {};
		\node [style=none] (30) at (6.5, 1) {};
		\node [style=none] (31) at (3, -1) {};
		\node [style=none] (32) at (6.5, -1) {};
		\node [style=Z dot] (33) at (3.75, 1) {};
		\node [style=Z dot] (34) at (5.75, 1) {};
		\node [style=X dot] (35) at (3.75, -1) {};
		\node [style=X dot] (36) at (5.75, -1) {};
		\node [style=Z phase dot] (37) at (4.75, -1) {$\alpha$};
		\node [style=none] (38) at (7.25, 0) {$=$};
		\node [style=none] (39) at (8, -1) {};
		\node [style=none] (40) at (11.5, -1) {};
		\node [style=none] (41) at (8, 1) {};
		\node [style=none] (42) at (11.5, 1) {};
		\node [style=Z dot] (43) at (8.75, -1) {};
		\node [style=Z dot] (44) at (10.75, -1) {};
		\node [style=X dot] (45) at (8.75, 1) {};
		\node [style=X dot] (46) at (10.75, 1) {};
		\node [style=Z phase dot] (47) at (9.75, 1) {$\alpha$};
	\end{pgfonlayer}
	\begin{pgfonlayer}{edgelayer}
		\draw (8.center) to (9.center);
		\draw (10.center) to (11.center);
		\draw (25) to (17);
		\draw (26) to (25);
		\draw (25) to (27);
		\draw (29.center) to (30.center);
		\draw (31.center) to (32.center);
		\draw (33) to (35);
		\draw (34) to (36);
		\draw (39.center) to (40.center);
		\draw (41.center) to (42.center);
		\draw (43) to (45);
		\draw (44) to (46);
	\end{pgfonlayer}
\end{tikzpicture}
\end{figure}
which follow from the other ZX calculus rules (see e.g.~\cite{zxworks}). Now, we can show the Yang-Baxter equation for $\rho^{isg}(\sigma_2)$ and $\rho^{isg}(\sigma_3)$ as shown in Figure~\ref{fig: zx-yang-baxter-ising}. Here, we used the spider-fusion law, the P-rule in the step marked $(*)$, and the fact that two CNOT gates cancel.
\begin{figure}[!h]
\begin{tikzpicture}[tikzfig]
	\begin{pgfonlayer}{nodelayer}
		\node [style=none] (0) at (0.5, 1) {};
		\node [style=none] (1) at (4, 1) {};
		\node [style=none] (2) at (0.5, -1) {};
		\node [style=none] (3) at (4, -1) {};
		\node [style=Z phase dot] (4) at (3.75, 0) {$\frac{-\pi}{2}$};
		\node [style=X dot] (5) at (2.75, 0) {};
		\node [style=Z dot] (6) at (2.25, -1) {};
		\node [style=Z dot] (7) at (2.25, 1) {};
		\node [style=none] (8) at (5, 0) {$=$};
		\node [style=X phase dot] (9) at (1.25, -1) {$\frac{-\pi}{2}$};
		\node [style=none] (10) at (6, 1) {};
		\node [style=none] (11) at (10.5, 1) {};
		\node [style=none] (12) at (6, -1) {};
		\node [style=none] (13) at (10.5, -1) {};
		\node [style=Z phase dot] (14) at (8.25, -1) {$\frac{-\pi}{2}$};
		\node [style=X dot] (15) at (7.5, -1) {};
		\node [style=Z dot] (16) at (7.5, 1) {};
		\node [style=X phase dot] (17) at (6.75, -1) {$\frac{-\pi}{2}$};
		\node [style=X dot] (18) at (9, -1) {};
		\node [style=Z dot] (19) at (9, 1) {};
		\node [style=X phase dot] (20) at (3.25, -1) {$\frac{-\pi}{2}$};
		\node [style=X phase dot] (21) at (9.75, -1) {$\frac{-\pi}{2}$};
		\node [style=none] (22) at (11.25, 0) {$=$};
		\node [style=none] (23) at (12, 1) {};
		\node [style=none] (24) at (17, 1) {};
		\node [style=none] (25) at (12, -1) {};
		\node [style=none] (26) at (17, -1) {};
		\node [style=Z phase dot] (27) at (14.5, -1) {$\frac{-\pi}{2}$};
		\node [style=X dot] (28) at (12.75, -1) {};
		\node [style=Z dot] (29) at (12.75, 1) {};
		\node [style=X phase dot] (30) at (13.5, -1) {$\frac{-\pi}{2}$};
		\node [style=X dot] (31) at (16.25, -1) {};
		\node [style=Z dot] (32) at (16.25, 1) {};
		\node [style=X phase dot] (33) at (15.5, -1) {$\frac{-\pi}{2}$};
		\node [style=none] (34) at (3.25, -3.5) {$=$};
		\node [style=none] (35) at (4, -2.5) {};
		\node [style=none] (36) at (9, -2.5) {};
		\node [style=none] (37) at (4, -4.5) {};
		\node [style=none] (38) at (9, -4.5) {};
		\node [style=X phase dot] (39) at (6.5, -4.5) {$\frac{-\pi}{2}$};
		\node [style=X dot] (40) at (4.75, -4.5) {};
		\node [style=Z dot] (41) at (4.75, -2.5) {};
		\node [style=Z phase dot] (42) at (5.5, -4.5) {$\frac{-\pi}{2}$};
		\node [style=X dot] (43) at (8.25, -4.5) {};
		\node [style=Z dot] (44) at (8.25, -2.5) {};
		\node [style=Z phase dot] (45) at (7.5, -4.5) {$\frac{-\pi}{2}$};
		\node [style=none] (46) at (3.25, -2.75) {$(*)$};
		\node [style=none] (47) at (10, -3.5) {$=$};
		\node [style=none] (48) at (10.75, -2.5) {};
		\node [style=none] (49) at (17, -2.5) {};
		\node [style=none] (50) at (10.75, -4.5) {};
		\node [style=none] (51) at (17, -4.5) {};
		\node [style=X phase dot] (52) at (14.5, -4.5) {$\frac{-\pi}{2}$};
		\node [style=X dot] (53) at (11.5, -4.5) {};
		\node [style=Z dot] (54) at (11.5, -2.5) {};
		\node [style=Z phase dot] (55) at (12.25, -4.5) {$\frac{-\pi}{2}$};
		\node [style=X dot] (56) at (16.25, -4.5) {};
		\node [style=Z dot] (57) at (16.25, -2.5) {};
		\node [style=Z phase dot] (58) at (15.5, -4.5) {$\frac{-\pi}{2}$};
		\node [style=X dot] (59) at (13, -4.5) {};
		\node [style=Z dot] (60) at (13, -2.5) {};
		\node [style=X dot] (61) at (13.75, -4.5) {};
		\node [style=Z dot] (62) at (13.75, -2.5) {};
		\node [style=none] (63) at (3.25, -7) {$=$};
		\node [style=none] (64) at (4, -6) {};
		\node [style=none] (65) at (10, -6) {};
		\node [style=none] (66) at (4, -8) {};
		\node [style=none] (67) at (10, -8) {};
		\node [style=X phase dot] (68) at (7, -8) {$\frac{-\pi}{2}$};
		\node [style=X dot] (69) at (4.75, -8) {};
		\node [style=Z dot] (70) at (4.75, -6) {};
		\node [style=Z phase dot] (71) at (5.5, -8) {$\frac{-\pi}{2}$};
		\node [style=X dot] (72) at (9.25, -8) {};
		\node [style=Z dot] (73) at (9.25, -6) {};
		\node [style=Z phase dot] (74) at (8.5, -8) {$\frac{-\pi}{2}$};
		\node [style=X dot] (75) at (6.25, -8) {};
		\node [style=Z dot] (76) at (6.25, -6) {};
		\node [style=X dot] (77) at (7.75, -8) {};
		\node [style=Z dot] (78) at (7.75, -6) {};
		\node [style=none] (79) at (11.5, -6) {};
		\node [style=none] (80) at (16.5, -6) {};
		\node [style=none] (81) at (11.5, -8) {};
		\node [style=none] (82) at (16.5, -8) {};
		\node [style=Z phase dot] (83) at (17, -7) {$\frac{-\pi}{2}$};
		\node [style=X dot] (84) at (16, -7) {};
		\node [style=Z dot] (85) at (15.5, -8) {};
		\node [style=Z dot] (86) at (15.5, -6) {};
		\node [style=X phase dot] (87) at (14, -8) {$\frac{-\pi}{2}$};
		\node [style=none] (88) at (10.75, -7) {$=$};
		\node [style=Z phase dot] (89) at (13.75, -7) {$\frac{-\pi}{2}$};
		\node [style=X dot] (90) at (12.75, -7) {};
		\node [style=Z dot] (91) at (12.25, -8) {};
		\node [style=Z dot] (92) at (12.25, -6) {};
	\end{pgfonlayer}
	\begin{pgfonlayer}{edgelayer}
		\draw (0.center) to (1.center);
		\draw (2.center) to (3.center);
		\draw (5) to (4);
		\draw (6) to (5);
		\draw (5) to (7);
		\draw (10.center) to (11.center);
		\draw (12.center) to (13.center);
		\draw (15) to (16);
		\draw (18) to (19);
		\draw (23.center) to (24.center);
		\draw (25.center) to (26.center);
		\draw (28) to (29);
		\draw (31) to (32);
		\draw (35.center) to (36.center);
		\draw (37.center) to (38.center);
		\draw (40) to (41);
		\draw (43) to (44);
		\draw (48.center) to (49.center);
		\draw (50.center) to (51.center);
		\draw (53) to (54);
		\draw (56) to (57);
		\draw (59) to (60);
		\draw (61) to (62);
		\draw (64.center) to (65.center);
		\draw (66.center) to (67.center);
		\draw (69) to (70);
		\draw (72) to (73);
		\draw (75) to (76);
		\draw (77) to (78);
		\draw (79.center) to (80.center);
		\draw (81.center) to (82.center);
		\draw (84) to (83);
		\draw (85) to (84);
		\draw (84) to (86);
		\draw (90) to (89);
		\draw (91) to (90);
		\draw (90) to (92);
	\end{pgfonlayer}
\end{tikzpicture}
\caption{The ZX-representation of the Yang-Baxter equation.} \label{fig: zx-yang-baxter-ising}
\end{figure}
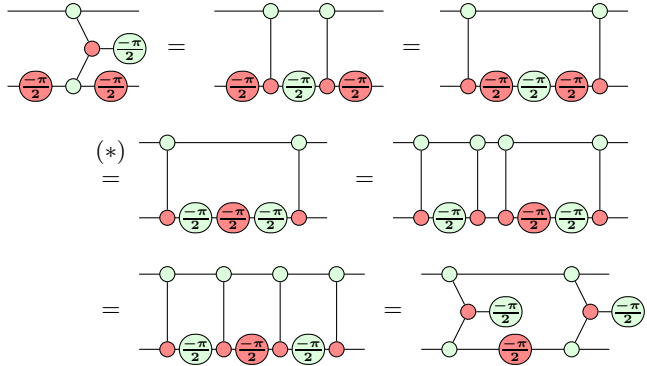
The Yang-Baxter equation for $\rho^{isg}(\sigma_3)$ and $\rho^{isg}(\sigma_4)$ can be proved using the mirror-image of the derivation above.

We leave the full ``ZX-fraction'' of the two-qubit Fibonacci model and graphical proofs of the associated Yang-Baxter equations as future work.

\section{Conclusion and Future Work}
We have demonstrated that the category describing topological quantum computation is actually a subcategory of \hilb. We represented elements of the Fibonacci and Ising models with the ZX-calculus, and showed a new Euler decomposition rule (P-rule) for the single qubit Fibonacci case.

Thanks to the universality of ZX-diagrams, it follows that we can find a graphical representation for any linear map over qubits as a ZX-diagram. This yields very simple representations and proofs in the case of the Ising representation of $B_3$ and $B_6$, encoding 1 and 2 logical qubits, respectively. However, the Fibonacci representation of $B_6$ on three-qubit wires yields unwieldy representations for some of the braid operators as ZX-diagrams, partly due to the need to translate quantum CCZ gates. It could be the case that by switching to a graphical calculus like the ZH-calculus~\cite{backens2018zhcalculus}, which can more elegantly capture CCZ and related constructions, we can more easily represent and work with this representation.

Another interesting area of research is to identify a new \textit{Fibonacci fragment} of the ZX-calculus, consisting of Z-phases that are linear combinations of $\pi/10$ and $\theta = 2Arccos(\phi^{-1})$. The contains the Clifford fragment as well as a non-Clifford phase gate, hence must be universal. It also contains (at least) one new exact P-rule given by Theorem~\ref{them:fib-p-rule}. It would therefore be interesting to see what (if any) other new rules are needed to produce a complete graphical calculus, and whether this suffices for proving any equation involving the ZX representation Fibonacci anyons.

\bibliographystyle{apsrev4-1}
\bibliography{aipsamp.bib}% Produces the bibliography via BibTeX.

\end{document}